\declaretheorem{theorem}
\declaretheorem{corollary}
\declaretheorem{lemma}
\declaretheorem{proposition}
\declaretheorem{observation}
\declaretheorem{remark}
\algnewcommand\algorithmicswitch{\textbf{switch}}
\algnewcommand\algorithmiccase{\textbf{case}}
\newcommand{\satisfying}[1]{\ensuremath{\mathsf{Sol}(#1)}}
\newcommand{\Vars}[1]{\ensuremath{\mathsf{Vars}(#1)}}
\newcommand{\ComputeFZero}{\ensuremath{\mathsf{ComputeF0}}}
\newcommand{\PickHashFunctions}{\ensuremath{\mathsf{PickHashFunctions}}}
\newcommand{\ChooseHashFunctions}{\ensuremath{\mathsf{ChooseHashFunctions}}}
\newcommand{\EndStream}{\ensuremath{\mathsf{EndStream}}}
\newcommand{\ProcessUpdate}{\ensuremath{\mathsf{ProcessUpdate}}}
\newcommand{\ComputeEst}{\ensuremath{\mathsf{ComputeEst}}}
\newcommand{\TrailZero}{\ensuremath{\mathsf{TrailZero}}}
\newcommand{\ApproxMC}{\ensuremath{\mathsf{ApproxMC}}}
\newcommand{\ApproxMCTwo}{\ensuremath{\mathsf{ApproxMC2}}}
\newcommand{\FindMin}{\ensuremath{\mathsf{FindMin}}}
\newcommand{\AffineFindMin}{\ensuremath{\mathsf{AffineFindMin}}}
\newcommand{\FindMaxRange}{\ensuremath{\mathsf{FindMaxRange}}}
\newcommand{\thresh}{\ensuremath{\mathsf{Thresh}}}
\newcommand{\BoundedSAT}{\ensuremath{\mathsf{BoundedSAT}}}
\newcommand{\ApproxModelCountMin}{\ensuremath{\mathsf{ApproxModelCountMin}}}
\newcommand{\ApproxModelCountEst}{\ensuremath{\mathsf{ApproxModelCountEst}}}
\newcommand{\Hteop}{\ensuremath{\mathcal{H}_{\mathsf{Toeplitz}}}}
\newcommand{\Hxor}{\ensuremath{\mathcal{H}_{\mathsf{xor}}}}
\newcommand{\Bucketing}{\ensuremath{\mathsf{Bucketing}}}
\newcommand{\Estimation}{\ensuremath{\mathsf{Estimation}}}
\newcommand{\MinValue}{\ensuremath{\mathsf{Minimum}}}
\newcommand{\apsestimator}{\ensuremath{\mathsf{APS\mbox{-}Estimator}}}
\newcommand{\ignore}[1]{}
\begin{document}
\fancyhead{}

\title{Model Counting meets $F_0$ Estimation}\titlenote{The authors decided to forgo the convention of alphabetical ordering of names in favor of a randomized ordering, denoted by \textcircled{r}. The publicly verifiable record of the randomization is available at \protect\url{https://www.aeaweb.org/journals/policies/random-author-order/search} with confirmation code: XiQE7V3pKq\_A.
For citation of the work, authors request that the citation guidelines by AEA (available at \protect\url{https://www.aeaweb.org/journals/policies/random-author-order}) for random author ordering be followed.
}

\author{A. Pavan \textcircled{r}}
\email{pavan@cs.iastate.edu}
\affiliation{
	\institution{Iowa State University}
}

\author{N. V. Vinodchandran \textcircled{r}}
\email{vinod@cse.unl.edu}
\affiliation{
	\institution{University of Nebraska, Lincoln}
}

\author{Arnab Bhattacharyya \textcircled{r}}
\email{arnabb@nus.edu.sg}
\affiliation{%
	\institution{National University of Singapore}
}

\author{Kuldeep  S. Meel}
\email{meel@comp.nus.edu.sg}
\affiliation{%
	\institution{National University of Singapore}
}


\begin{abstract}

Constraint satisfaction problems (CSP's) and data stream models are two powerful abstractions to capture a wide variety of problems arising in different domains of computer science. Developments in the two communities have mostly occurred independently and with little interaction between them.
In this work, we seek to investigate whether bridging the seeming communication gap between the two communities may pave the way to richer fundamental insights. To this end, we focus on two foundational problems: model counting for CSP's and computation of zeroth frequency moments $(F_0)$ for data streams. 

Our investigations lead us to observe striking similarity in the core techniques employed in the algorithmic frameworks that have evolved separately for model counting and $F_0$ computation. We design a recipe for translation of algorithms developed for $F_0$ estimation to that of model counting, resulting in new 
algorithms for model counting. We then observe that algorithms in the context of distributed streaming can be transformed to distributed algorithms for model counting. We next turn our attention to viewing streaming from the lens of counting and show that framing $F_0$ estimation as a special case  of \#DNF counting allows us to obtain a general recipe for a rich class of streaming problems, which had been subjected to case-specific analysis in prior works. In particular, our view yields a state-of-the art algorithm for multidimensional range efficient $F_0$ estimation with a simpler analysis. 
\end{abstract}

\keywords{Model Counting, Streaming Algorithms, $F_0$-computation, DNF Counting}

	\maketitle

\section{Introduction}\label{sec:introduction}

{\em Constraint Satisfaction Problems} (CSP's) and the {\em data stream model} are two 
core themes in computer science with a diverse set of applications, ranging from probabilistic reasoning, networks, databases, verification, and the like. {\em Model counting} and computation of {\em zeroth frequency moment} ($F_0$) are  fundamental problems for CSP's and the data stream model respectively. This paper is motivated by our observation that despite the usage of similar algorithmic techniques for the two problems, the developments in the two communities have, surprisingly, evolved separately, and rarely has a paper from one community been cited by the other. 

Given a set of constraints $\varphi$ over a set of variables in a finite domain $\mathcal{D}$, the problem of model counting is to estimate the number of solutions of $\varphi$. We are often interested when $\varphi$ is restricted to a special class of representations such as Conjunctive Normal Form (CNF) and Disjunctive Normal Form (DNF). A data stream over a domain $[N]$ is represented by $\mathbf{a} = a_1, a_2, \cdots a_m$ wherein each item $a_i \subseteq [N]$. The {\em zeroth frequency moment}, denoted as $F_0$,  of $\mathbf{a}$ is the number of distinct elements appearing in $\mathbf{a}$, i.e.,  $|\cup_{i} a_i|$ (traditionally, $a_i$s are singletons; we will also be interested in the case when $a_i$s are sets). The fundamental nature of model counting and $F_0$ computation  over data streams has led to intense interest from theoreticians and practitioners alike in the respective communities for the past few decades. 

The starting point of this work is the confluence of two viewpoints. The first viewpoint contends that some of the algorithms for model counting can conceptually be thought of as operating on the stream of the solutions of the constraints. The second viewpoint contends that a stream can be viewed as a DNF formula, and the problem of $F_0$ estimation is similar to model counting. These viewpoints make it natural to believe that algorithms developed in the streaming setting can be directly applied to model counting, and vice versa. We explore this connection and indeed, design new algorithms for model counting inspired by algorithms for estimating $F_0$ in data streams. By exploring this connection further, we design new  algorithms to estimate $F_0$ for streaming sets that are succinctly represented by constraints. To put our contributions in context, we briefly survey the historical development of algorithmic frameworks in both model counting and $F_0$ estimation and point out the similarities.

\subsection*{Model Counting}
The complexity-theoretic study of model counting was initiated by Valiant who showed that this  problem, in general, is \#P-complete ~\cite{Valiant79}.   This motivated researchers to investigate approximate model counting and in particular achieving $(\varepsilon,\delta)$-approximation schemes.  The complexity of approximate model counting depends on its representation. When the model $\varphi$  is represented as a CNF formula $\varphi$, designing an efficient $(\varepsilon,\delta)$-approximation is  NP-hard \cite{Stockmeyer83}. In contrast, when it is represented as a DNF formula, model counting admits FPRAS (fully polynomial-time approximation scheme)~\cite{KL83,KLM89}. We will use \#CNF to refer to the case when $\varphi$ is a CNF formula while \#DNF to refer to the case when $\varphi$ is a DNF formula. 

For \#CNF, Stockmeyer~\cite{Stockmeyer83} provided a hashing-based randomized procedure that can compute ($\varepsilon,\delta)$-approximation within time polynomial in $|\varphi|, \varepsilon,\delta$,  given access to an NP oracle.  Building on Stockmeyer's approach and motivated by the unprecedented breakthroughs in the design of SAT solvers, researchers have proposed a series of algorithmic improvements that have allowed the hashing-based techniques for approximate model counting to scale to formulas involving hundreds of thousands of variables~\cite{GSS07,CMV13b,EGSS13c,IMMV15,CDM15,CMV16,AT17,SM19,SGM20}. The practical implementations substitute NP oracle with SAT solvers. In the context of model counting, we are primarily interested in time complexity and therefore, the number of NP queries is of key importance. The emphasis on the number of NP  calls also stems from practice as the practical implementation of model counting algorithms have shown to spend over 99\% of their time in the underlying SAT calls~\cite{SM19}.

Karp and Luby~\cite{KL83} proposed the first FPRAS scheme for \#DNF, which was subsequently improved in the follow-up works~\cite{KLM89,DKLR00}.  Chakraborty, Meel, and Vardi~\cite{CMV16} demonstrated  that the hashing-based framework can be extended to \#DNF, hereby providing a unified framework for both \#CNF and \#DNF. Meel, Shrotri, and Vardi~\cite{MSV17,MSV18,MSV19} subsequently improved the complexity of the hashing-based approach for \#DNF and observed that hashing-based techniques achieve better scalability than that of Monte Carlo techniques. 

\subsection*{Zeroth Frequency Moment Estimation}
Estimating $(\varepsilon,\delta)$-approximation of the $k^{\rm th}$ frequency moments ($F_k$) is a central problem in the data streaming model~\cite{AMS99}. In particular,  considerable work has been done in designing algorithms for estimating the $0^{th}$ frequency moment ($F_0$), the number of distinct elements in the stream. While designing streaming algorithms, the primary resource concerns are two-fold: space complexity and processing time per element. For an algorithm to be considered efficient,  these should be ${\rm poly}(\log N,1/\epsilon)$ where $N$ is the size of the universe~\footnote{We ignore $O(\log {1 \over \delta})$ factor in this discussion}. 

The first algorithm for computing $F_0$ with a constant factor approximation was proposed by Flajolet and Martin, who assumed the existence of hash functions with ideal properties resulting in an algorithm with undesirable space complexity~\cite{FM85}. In their seminal work, Alon, Matias, and Szegedy designed  an $O(\log N)$ space algorithm for $F_0$ with a constant approximation ratio that employs 2-universal hash functions~\cite{AMS99}. Subsequent investigations into hashing-based schemes by Gibbons and Tirthapura~\cite{GT01} and Bar-Yossef, Kumar, and Sivakumar~\cite{BKS02} provided $(\varepsilon, \delta)$-approximation algorithms with space and time complexity $\log N \cdot {\rm poly} ({1\over \varepsilon})$. 
Subsequently, Bar-Yossef et al. proposed {\em three algorithms} with improved space and time complexity~\cite{BJKST02}. While the three algorithms employ hash functions, they differ conceptually in the usage of relevant random variables for the estimation of $F_0$. 
This line of work resulted in the development of an algorithm with optimal space complexity ${O}(\log N + {1\over \varepsilon^2})$ and $O(\log N)$ update time \cite{KNW10}. 

The above-mentioned works are in the setting where each data item $a_i$ is an element of the universe.
Subsequently, there has been a series of results of estimating $F_0$ in  rich scenarios with particular focus to handle the cases $a_i \subseteq [N]$ such as a list or a multidimensional range~\cite{BKS02, PT07, TW12,SP09}.

\ignore{
The first algorithm is based on the idea that if we hash all the items of the stream, then $\mathcal{O}(1/\varepsilon^2)$-th minimum of the hash valued is a good estimator of $F_0$. The second algorithm  chooses a set of  k functions, $\{h_1, h_2, \ldots \}$,  such that each $h_j$ is picked randomly from $\mathcal{O}(1/\varepsilon^2)$-independent hash family. For each hash function $h_j$, we say that $h_j$ is not {\em lonely} if there exists $a_i \in \mathbf{a}$ such that $h_j(a_i) = 0$. One can then estimate $F_0$ of $\mathbf{a}$ by estimating the number of hash functions that are not lonely. Finally, the third algorithm ... \todo{@Vinod: Can you add description of bucketing algorithm using the sub-sampling terms...}. We will refer to the above three algorithms based on their strategies: Minimum, Estimation, and Bucketing. 
}

\subsection*{The Road to a Unifying Framework}

As mentioned above, the algorithmic developments for model counting and $F_0$ estimation 
have largely relied on the usage of hashing-based techniques and yet these developments have, surprisingly, been separate, and rarely has a work from one community been cited by the other. In this context, we wonder whether it is possible to bridge this gap and if such an exercise would contribute to new algorithms for model counting as well as for $F_0$ estimation? The main conceptual contribution of this work is an affirmative answer to the above question. First, we point out that the two well-known algorithms; Stockmeyer's \#CNF algorithm ~\cite{Stockmeyer83} that is further refined  by 
Chakraborty et. al. ~\cite{CMV16} and Gibbons and Tirthapura's  $F_0$ estimation algorithm~\cite{GT01}, are essentially the same. 

The core idea of the hashing-based technique of Stockmeyer's and Chakraborty et al's scheme is to use pairwise independent hash functions to partition the solution space (satisfying assignments of a CNF formula) into {\em roughly equal and small } cells, wherein a cell is {\em small} if the number of solutions is less than a pre-computed threshold, denoted by {\thresh}. Then a good estimate  for the number of solutions is the {\em number of solutions in an arbitrary cell} $\times$ {\em number of cells}.  To partition the solution space, pairwise independent hash functions are used.
To determine the appropriate number of cells, the solution space is iteratively partitioned  as follows. At the $m^{th}$ iteration, a hash function with range $\{0,1\}^m$ is considered resulting in cells $h^{-1}(y)$ for each $y\in \{0,1\}^m$.  An NP oracle can be employed to check whether a particular cell (for example $h^{-1}(0^m)$) is small by enumerating solutions one by one until we have either obtained $\thresh$+1 number of solutions or we have exhaustively enumerated all the solutions. If the the cell  $h^{-1}(0^m)$ is small, then the algorithm outputs $t\times 2^m$ as an estimate where $t$ is the number of solutions in the cell $h^{-1}(0^m)$. If the cell $h^{-1}(0^m)$ is not small,  then the algorithm moves on to the next iteration where a hash function with  range $\{0,1\}^{m+1}$ is considered.

\ignore{
 wherein we first invoke an NP oracle to determine if the number of solutions of the formula is less than {\thresh}. We then partition the space of solution into two cells and choose a cell randomly by conjuncting the formula with a random XOR equation. We continue to add random XORs until the chosen cell is indeed small, and at this point, the  the estimate of the number of solutions is simply the product of the number of cells ($2^m$), where $m$ is the number of XORs added, and the number of solutions in the chosen cell. The number of invocations of NP oracles is often the primary concern from algorithmic point of view. 
}

We now describe Gibbons and Tirthapura's algorithm for $F_0$ estimation which we call the {\Bucketing} algorithm. We will assume the universe $[N] = \{0,1\}^n$. The algorithm maintains a bucket of size $\thresh$ and starts by picking a hash function  $h:\{0,1\}^n \rightarrow \{0,1\}^n$. It iterates over  sampling levels. At level $m$, when a data item $x$ comes, if $h(x)$ starts with $0^m$,  then $x$ is added to the bucket. If the bucket overflows, then the sampling level is increased to $m+1$ and all  elements $x$ in the bucket other than the ones with $h(x)=0^{m+1}$ are deleted. At the end of the stream, the value $t\times 2^{m}$ is output as the estimate where $t$ is the number of elements in the bucket and $m$ is the sampling level. 

These two algorithms are conceptually the same. In the \Bucketing\ algorithm, at the sampling level $m$, it looks at only the first $m$ bits of the hashed value; this is equivalent to considering a hash function with range $\{0,1\}^m$. Thus the bucket is nothing but all the elements in the stream that belong to the cell $h^{-1}(0^m)$.  
The final estimate is  the number of  elements in the bucket times the   number of cells, identical to Chakraborty et. al's algorithm. In both algorithms, to obtain an $(\varepsilon, \delta)$ approximation, the \thresh~value is chosen as $O({1 \over \varepsilon^2})$ and the median of $O(\log {1\over \delta})$ independent estimations is output. 
\subsection*{Our Contributions}

Motivated by the conceptual identity between the two algorithms, we further explore the connections between algorithms for model counting and $F_0$ estimation. 

\begin{enumerate}
	\item We  formalize a recipe to transform streaming algorithms for $F_0$ estimation to those for model counting. Such a transformation yields new $(\varepsilon, \delta)$-approximate algorithms for model counting, which are different from currently known algorithms. Recent studies in the fields of automated reasoning have highlighted the need for diverse approaches~\cite{XHHL08}, and similar studies in the context of \#DNF provided strong evidence to the power of diversity of approaches~\cite{MSV18}. In this context, these newly obtained algorithms open up several new interesting directions of research ranging from the development of MaxSAT solvers with native XOR support to open problems in designing FPRAS schemes. 
	
	\item Given the central importance of \#DNF (and its weighted variant) due to a recent surge of interest in scalable techniques for provenance in probabilistic databases~\cite{S18,S19}, a natural question is whether one can design efficient techniques in the distributed setting. In this work, we initiate the study of distributed \#DNF. We then show that the  transformation recipe from $F_0$ estimation to model counting allows us to view the problem of the design of distributed \#DNF algorithms through the lens of {\em distributed functional monitoring} that is well studied in the data streaming literature. 
	
	\item  Building upon the connection between model counting and $F_0$ estimation, we design new algorithms to estimate $F_0$ over {\em structured set streams} where each element of the stream is a  (succinct representation of a) subset of the universe. Thus, the stream is $S_1, S_2, \cdots $ where each $S_i \subseteq [N]$ and the goal is to estimate the $F_0$ of the stream, i.e.~size of  $\cup_{i} S_i$. In this scenario,  a traditional $F_0$ streaming algorithm that processes each element of the set incurs high per-item processing time-complexity and is inefficient. Thus the goal is to design algorithms whose per-item time (time to process each $S_i$) is poly-logarithmic in the size of the universe. Structured set streams that are considered in the literature include  1-dimensional and multidimensional ranges~\cite{PT07,TW12}. Several interesting problems such as max-dominance norm~\cite{CM03}, counting triangles in graphs~\cite{BKS02}, and distinct summation problem~\cite{CLKB04} can be reduced to computing $F_0$ over such ranges. 
	
\smallskip	
We observe that several structured sets can be represented as small DNF formulae and thus  $F_0$ counting  over these structured set data streams can be viewed as a special case of \#DNF.  Using the hashing-based techniques for \#DNF, we  obtain a general recipe for a rich class of  structured sets that include multidimensional ranges, multidimensional arithmetic progressions, and affine spaces.  Prior work on single and multidimensional ranges~\footnote{Please refer to Remark \ref{remark} in Section \ref{sec:multidimrange} for a discussion on the earlier work on multidimensional ranges~\cite{TW12}.} had to rely on involved analysis for each of the specific instances, while our work provides a general recipe for both analysis and implementation.  
	
\end{enumerate}

\subsection*{Organization}

We present notations and preliminaries in Section~\ref{sec:prelims}. We then present the transformation of $F_0$ estimation to model counting in Section~\ref{sec:counting}. We then focus on distributed \#DNF in Section~\ref{sec:distributed}. We then present the transformation of model counting algorithms to structured set streaming algorithms in Section~\ref{sec:streaming}. We conclude in  Section~\ref{sec:conclusion} with a discussion of future research directions. 

We would like to emphasize that the primary objective of this work is to provide a unifying framework for $F_0$ estimation and model counting. Therefore, when designing new algorithms based on the transformation recipes, we intentionally focus on conceptually cleaner algorithms and leave potential improvements in time and space complexity for future work. 

\section{Notation}\label{sec:prelims}

We will use assume the universe $[N] = \{0,1\}^n$.
We write $\Pr\left[\mathcal{Z}: {\Omega} \right]$ to denote the probability of
outcome $\mathcal{Z}$ when sampling from a probability space ${\Omega}$.  For
brevity, we omit ${\Omega}$ when it is clear from the context. 

\subsubsection*{$F_0$ Estimation}
A data stream $\mathbf{a}$ over domain $[N]$ can be represented as $\mathbf{a} = a_1, a_2, \ldots a_m$ wherein each item $a_i \in [N]$. Let $\mathbf{a}_u = \cup_{i} \{a_i\}$. 
$F_0$ of the stream $\mathbf{a}$ is $|\mathbf{a}_u|$. We are often interested in a a {\em probably approximately correct} scheme that returns an {\em $(\varepsilon,\delta)$-estimate} $c$, i.e., 
\begin{align*}
\Pr\left[\frac{|\mathbf{a}_u|}{1+\varepsilon} \leq c \leq (1+\varepsilon) |\mathbf{a}_u| \right] \geq 1-\delta
\end{align*}

\subsubsection*{Model Counting}

Let $\{x_1, x_2, \ldots x_n\}$ be a set of Boolean variables.
For a Boolean formula  $\varphi$, let $\Vars{\varphi}$ denote the set of variables appearing in $\varphi$.  Throughout the paper, unless otherwise stated, we will assume that the relationship $n = |\Vars{\varphi}|$ holds. 
We denote the set of all satisfying assignments  of $\varphi$ by $\satisfying{\varphi}$. 

The \emph{propositional model counting problem} is to compute
$|\satisfying{\varphi}|$ for a given formula $\varphi$.  A \emph{probably approximately correct} (or PAC) counter is a probabilistic algorithm ${\mathsf{ApproxCount}}(\cdot, \cdot,\cdot)$ that takes as inputs a formula $\varphi$,  a tolerance $\varepsilon>0$, and a confidence  $\delta\in (0, 1]$, and returns a $(\varepsilon,\delta)$-estimate $c$, i.e., 
\begin{align*}
 \Pr\Big[\frac{|\satisfying{\varphi}|}{1+\varepsilon} \le c \le (1+\varepsilon)|\satisfying{\varphi}|\Big] \ge 1-\delta.
\end{align*}
 PAC guarantees are also sometimes referred to as $(\varepsilon,\delta)$-guarantees. We use \#CNF (resp. \#DNF) to refer to the model counting problem when $\varphi$ is represented as CNF (resp. DNF).

\subsubsection*{k-wise Independent hash functions}

Let $n,m\in \mathbb{N}$ and $\mathcal{H}(n,m) \triangleq \{ h:\{0,1\}^{n} \rightarrow \{0,1\}^m \}$ be a family of hash functions mapping $\{0,1\}^n$ to $\{0,1\}^m$. We use $h \xleftarrow{R} \mathcal{H}(n,m)$ to denote the probability space obtained by choosing a function $h$ uniformly at random from $\mathcal{H}(n,m)$. 
\begin{definition}
	A family of hash functions $\mathcal{H}(n,m)$ is $k-$wise independent if $\forall \alpha_1, \alpha_2, \ldots \alpha_k \in \{0,1\}^m$,  $\text{ distinct } x_1, x_2, \ldots x_k \in \{0,1\}^n, h \xleftarrow{R} \mathcal{H}(n,m)$, 
	\begin{align}
	\Pr[(h(x_1) = \alpha_1) \wedge (h(x_2) = \alpha_2) \ldots (h(x_k) = \alpha_k) ] = \frac{1}{2^{km}} %
	\end{align}
\end{definition}	

We will use $\mathcal{H}_{\mathsf{k-wise}}(n,m)$ to refer to a $k-$wise independent family of hash functions mapping $\{0,1\}^n$ to $\{0,1\}^m$.

\subsubsection{Explicit families}	
In this work, one hash family of particular interest is $\Hteop(n,m)$, which is known to be  2-wise independent~\cite{carter1977universal}. The family is defined as follows:
$\Hteop(n,m) \triangleq \{ h: \{0,1\}^n \rightarrow \{0,1\}^m  \}$ is the family of functions of the form $h(x) = Ax+b$ with $A \in \mathbb{F}_{2}^{m \times n}$ and $b \in \mathbb{F}_{2}^{m \times 1}$ where $A$ is a uniformly randomly chosen Toeplitz matrix of size $m\times n$ while $b$ is uniformly randomly  matrix of size $m \times 1$. 
Another related hash family of interest is $\Hxor(n,m)$ wherein $h(X)$ is again of the form $Ax+b$ where $A$ and $b$ are uniformly randomly chosen matrices of sizes $m \times n$ and $m \times 1$ respectively.  Both $\Hteop$ and $\Hxor$ are 2-wise independent but it is worth noticing that $\Hteop$ can be represented with $\Theta(n)$-bits while $\Hxor$ requires $\Theta(n^2)$ bits of representation.

	For every $m \in \{1, \ldots
	n\}$, the $m^{th}$ prefix-slice of $h$, denoted $h_{m}$, is a
	map from $\{0,1\}^{n}$ to $\{0,1\}^m$, where $h_{m}(y)$ is the first $m$ bits of $h(y)$. Observe that when $h(x) = Ax+b$,  $h_{m}(x) = A_{m}x+b_{m}$, where $A_{m}$ denotes the submatrix formed by the first $m$ rows of $A$ and $b_{m}$ is the first $m$ entries of the vector $b$.


\section{From Streaming to Counting}\label{sec:counting}


As a first step, we present a unified view of the three hashing-based algorithms proposed in 
Bar-Yossef et al \cite{BJKST02}.  The first algorithm is the \Bucketing\ algorithm discussed above with the observation that instead of keeping the elements in the bucket, it suffices to keep their hashed values. Since in the context of model counting, our primary concern is with time complexity, we will focus on Gibbons and Tirthapura's \Bucketing\ algorithm in \cite{GT01} rather than Bar-Yossef et al.'s modification. 
The second algorithm, which we call \MinValue, is based on the idea that if we hash all the items of the stream, then $\mathcal{O}(1/\varepsilon^2)$-th minimum of the hash valued can be used compute a good estimate of $F_0$. The third algorithm, which we call \Estimation,  chooses a set of  $k$ functions, $\{h_1, h_2, \ldots \}$,  such that each $h_j$ is picked randomly from an $\mathcal{O}(\log(1/\varepsilon))$-independent hash family. For each hash function $h_j$, we say that $h_j$ is not {\em lonely} if there exists $a_i \in \mathbf{a}$ such that $h_j(a_i) = 0$. One can then estimate $F_0$ of $\mathbf{a}$ by estimating the number of hash functions that are not lonely. 

Algorithm~\ref{alg:hashstream}, called $\ComputeFZero$,  presents the overarching 
architecture of the three proposed algorithms. Each of these algorithms first picks 
an appropriate set of hash functions $H$  and initializes the sketch $\mathcal{S}$.   The architecture of {\ComputeFZero} is fairly simple:  it chooses a collection of hash functions using 
 \ChooseHashFunctions, calls the subroutine {\ProcessUpdate} for every incoming element of the stream, and invokes \ComputeEst\ at the end 
 of the stream to return the $F_0$ approximation.

 \subsubsection*{\ChooseHashFunctions} As shown in Algorithm \ref{alg:choosehash}, the hash functions depend on the strategy being implemented. The subroutine $\PickHashFunctions(\mathcal{H}, t)$ returns 
 a collection of $t$ independent hash functions from the family $\mathcal{H}$.  We use $H$ to denote the collection of hash functions  returned, this collection viewed as either 1-dimensional array or as a 2-dimensional array. When $H$ is 1-dimensional array, $H[i]$ to denote the $i$th hash function of the collection and when $H$ is a 2-dimensional array $H[i][]j$ is the $[i, j]$th hash functions.

\subsubsection*{Sketch Properties}

For each of the three algorithms, their corresponding sketches can be viewed as arrays of size of $35\log (1/\delta)$. The parameter $\thresh$ is set to $96/\varepsilon^2$.

\begin{description}
	\item[\Bucketing] The element $\mathcal{S}[i]$ 	is a tuple  $\langle \ell_i, m_i\rangle$ where $\ell_i$ is a list of size at most $\thresh$, where $\ell_i = \{x  \in \mathbf{a}\mid H[i]_{m_i}(x)= 0^{m_i}\}$.  We use $\mathcal{S}[i](0)$ to denote $\ell_i$ and $\mathcal{S}[i](1)$ to denote $m_i$.
	
	\item[\MinValue] The element $\mathcal{S}[i]$ holds the lexicographically distinct $\thresh$  many smallest elements of $\{H[i](x)~|~x \in \mathbf{a}\}$. 
	
	\item[\Estimation] The element $\mathcal{S}[i]$ holds a tuple of size $\thresh$. The $j$'th entry of this tuple is the largest number of trailing zeros in any element of $H[i,j](\mathbf{a})$.  	
\end{description}

 \subsubsection*{\ProcessUpdate} 
 For a new item $x$, the update of $\mathcal{S}$, as shown in Algorithm \ref{alg:processupdate} is as follows: 
\begin{description}
	\item[Bucketing]  For a new item 
		$x$, if $H[i]_{m_i}(x) = 0^{m_i}$, then we add it to $\mathcal{S}[i]$ if $x$ is not already present in $\mathcal{S}[i]$. If the size of $\mathcal{S}[i]$ is greater than $\thresh$ (which is set to be $\mathcal{O}(1/\varepsilon^2)$), then
we increment the $m_i$ as  in line~\ref{line:update-increase-prefix}.

\item[Minimum] For a new item $x$, if $H[i](x)$ is smaller than the $\max{\mathcal{S}[i]}$, then we replace $\max{\mathcal{S}[i]}$ 
with $H[i](x)$. 

\item[Estimation] For a new item $x$, compute $z = {\TrailZero(H[i,j](x))}$, i.e, the number of trailing zeros in $H[i,j](x)$,  and replace $\mathcal{S}[i,j]$ with $z$ if $z$ is larger than $\mathcal{S}[i,j]$.
\end{description}

\subsubsection*{\ComputeEst}
 Finally, for each of the algorithms, we estimate $F_0$ based on the sketch $\mathcal{S}$
as described in the subroutine {\ComputeEst} presented as Algorithm~\ref{alg:computest}. 
It is crucial to note that the estimation of $F_0$ is performed solely using the sketch $\mathcal{S}$ for
the Bucketing and Minimum algorithms. The Estimation algorithm requires an additional parameter $r$ that depends on a loose estimate of $F_0$;
we defer details to Section \ref{sec:estim}.

\begin{algorithm}[htb]
	\caption{$\ComputeFZero(n,\varepsilon, \delta)$}
	\label{alg:hashstream}
	\begin{algorithmic}[1]
		\State $\thresh \gets {96}/{\varepsilon^2}$
		\State $t \gets 35 \log(1/\delta)$
		\State $H \gets \ChooseHashFunctions(n,\thresh,t)$
		\State $\mathcal{S} \gets \{\}$
		\While{true}
		\If{$\EndStream$} exit;
		\EndIf
		\State $x \gets input()$
		\State $\ProcessUpdate(\mathcal{S}, H, x, \thresh)$
		\EndWhile
		\State $Est \gets \ComputeEst(\mathcal{S},\thresh)$
		\State \Return $Est$
	\end{algorithmic}
\end{algorithm}

\begin{algorithm}
	\caption{\ChooseHashFunctions($n, \thresh, t$)}
	\label{alg:choosehash}
	\begin{algorithmic}[1]
	    \Switch{AlgorithmType}
	    \Case{AlgorithmType==\Bucketing}
	    \State $H \gets \PickHashFunctions(\Hteop(n,n),t)$
	    \EndCase
		\Case{AlgorithmType==\MinValue}
		\State $H \gets \PickHashFunctions(\Hteop(n,3n),t)$
		\EndCase	
		\Case{AlgorithmType==\Estimation}
			   \State $s \gets 10 \log(1/\varepsilon)$
		\State $H \gets \PickHashFunctions(\mathcal{H}_{s-{\rm wise}}(n,n), t \times \thresh)$
		\EndCase
	    \EndSwitch
	    \Return $H$
	\end{algorithmic}
	
\end{algorithm}

\begin{algorithm}
	\caption{$\ProcessUpdate(\mathcal{S}, H, x, \thresh)$}
	\label{alg:processupdate}
	\begin{algorithmic}[1]
		\For {$i \in [1, |H|$]}
		\Switch{AlgorithmType}
			\Case{Bucketing}
			\State $m_i  =  \mathcal{S}[i](0)$
	\If {$H[i]_{m_i}(x) == 0^{m_i}$}
	\State $\mathcal{S}[i](0) \gets \mathcal{S}[i](0) \cup \{x\}$

			\If {$\mathrm{size}(\mathcal{S}[i](0)) > \thresh$}
				\State $\mathcal{S}[i](1) \gets \mathcal{S}[i](1) + 1$ \label{line:update-increase-prefix}
				\For{$y  \in \mathcal{S}$}
					\If {$H[i]_{m_i+1}(y) \neq 0^{m_i+1}$}
					\State Remove($\mathcal{S}[i](0), y$)
					\EndIf	
				\EndFor
			\EndIf
	\EndIf
	\EndCase
		\Case{Minimum}
\If {$\mathrm{size}(\mathcal{S}[i]) < \thresh$}
\State $\mathcal{S}[i].\mathrm{Append}(H[i](x))$
\Else
\State $j \gets \arg\max(S[i])$

\If {$\mathcal{S}[i](j) > H[i](x)$}
\State $\mathcal{S}[i](j)  \gets  H[i](x)$
\EndIf
\EndIf
	\EndCase
		\Case{Estimation}
		\For{$j \in [1, \thresh]$}
		\State $S[i,j] \gets \max(S[i,j], \TrailZero(H[i,j](x)))$	
		\EndFor	
		\EndCase		

		\EndSwitch	
		\EndFor 
		\State \Return $\mathcal{S}$
	\end{algorithmic}
\end{algorithm}

\begin{algorithm}
	\caption{\ComputeEst($\mathcal{S}, \thresh$)}
	\label{alg:computest}
	\begin{algorithmic}[1]
	
		\Switch{AlgorithmType}
	\Case{Bucketing}
	\State \Return  $\mathrm{Median}\left(\left\{ \mathrm{size}(\mathcal{S}[i](0))\times 2^{\mathcal{S}[i](1)}\right\}_{i}\right)$
	\EndCase
	\Case{Minimum}
	\State \Return $\mathrm{Median}\left(\left\{ \frac{\thresh \times 2^m }{\max \{\mathcal{S}[i] \} } \right\}_{i}   \right)$
\EndCase
	\Case{Estimation($r$)}
	\State \Return $\mathrm{Median}\left(\left\{\frac{\ln\left(1-\frac{1}{\thresh}\sum_{j=1}^{\thresh} \mathbb{1}\{\mathcal{S}[i,j]\geq r\}\right)}{\ln(1-2^{-r})}\right\}_i\right)$
\EndCase
\EndSwitch
	\end{algorithmic}
	
\end{algorithm}


%

\subsection{A Recipe For Transformation}
Observe that for each of the algorithms, the final computation of $F_0$ estimation depends on the sketch $\mathcal{S}$. Therefore, as long as for two streams $\mathbf{a}$ and $\hat{\mathbf{a}}$, if their corresponding sketches, say $\mathcal{S}$ and $\hat{\mathcal{S}}$ respectively, are equivalent, the three schemes presented above would return the same estimates. 
The recipe for a transformation of streaming algorithms to model counting algorithms is based on the following insight: 

	\begin{enumerate}
		\item Capture the relationship $\mathcal{P} (\mathcal{S}, H, \mathbf{a}_{u})$ between the sketch $\mathcal{S}$, set of hash functions $H$, and  set $\mathbf{a}_{u}$ at the end of stream.  Recall that $\mathbf{a}_{u}$ is the set of all distinct elements of the stream $\mathbf{a}$.
		\item The formula $\varphi$ is viewed as symbolic representation of the unique set $\mathbf{a}_{u}$ represented by the stream $\mathbf{a}$ such that $\satisfying{\varphi} = \mathbf{a}_{u}$.
		\item Given a formula $\varphi$ and set of hash functions $H$, design an algorithm to construct sketch $\mathcal{S}$ such that $\mathcal{P} (\mathcal{S}, H, \satisfying{\varphi})$ holds. And now, we can estimate $|\satisfying{\varphi}|$ from $\mathcal{S}$. 
	\end{enumerate}
	
In the rest of this section, we will apply the above recipe to the three types of $F_0$ estimation algorithms, and derive corresponding model counting algorithms. In particular, we show how applying the above recipe to the \Bucketing\ algorithm leads us to reproduce the state of the art hashing-based model counting algorithm, {\ApproxMC}, proposed by Chakraborty et al \cite{CMV16}. Applying the above recipe to \MinValue\ and \Estimation\ allows us to obtain fundamentally different schemes. In particular, we observe while model counting algorithms based on \Bucketing\ and \MinValue\ provide FPRAS's when $\varphi$ is DNF, such is not the case for the algorithm derived based on \Estimation.


\subsection{Bucketing-based Algorithm}\label{sec:bucketing}

The \Bucketing\ algorithm chooses a set $H$ of pairwise independent hash functions and  maintains a sketch $\mathcal{S}$ that we will describe.  Here we use $\Hteop$ as our choice of pairwise independent hash functions. 
The sketch $\mathcal{S}$ is an array where, each $\mathcal{S}[i]$ of the form $\langle c_i, m_i\rangle$. We say that the relation $\mathcal{P}_1 (\mathcal{S}, H, \mathbf{a}_u)$ holds if 
\begin{enumerate}
	\item $|\mathbf{a}_{u} \cap \{x~|~H[i]_{m_i-1}(x) = 0^{m_i-1}\}| \geq \frac{96}{\varepsilon^2}$ 
	\item $c_i = |\mathbf{a}_{u} \cap \{x~|~H[i]_{m_i}(x) = 0^{m_i}\}| < \frac{96}{\varepsilon^2}$
	
\end{enumerate}

The following lemma due to Bar-Yossef {\it et al.}~\cite{BJKST02} and Gibbons and Tirthapura~\cite{GT01} captures the relationship among the sketch $\mathcal{S}$, the relation $\mathcal{P}_1$ and the number of distinct elements of a multiset. 

\begin{lemma}~\cite{BJKST02,GT01}\label{lem:bucket}
Let $\mathbf{a} \subseteq \{0,1\}^n$ be a multiset and $H \subseteq \Hteop(n,n)$ where and each $H[i]$s  are independently drawn and $|H|=O(\log 1/\delta)$ and let $\mathcal{S}$ be such that the $\mathcal{P}_1 (\mathcal{S}, H, a_u)$ holds. Let $c =  \mbox{\rm Median }\{c_i \times 2^{m_i}\}_i$.
Then

\[\Pr \left[ \frac{|\mathbf{a}_u|}{(1+\varepsilon)} \leq c
\leq (1+\varepsilon)|\mathbf{a}_u|\right]\geq 1- \delta.\]
\end{lemma}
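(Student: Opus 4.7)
The proof follows the standard two-level template for hashing-based $F_0$ estimators. First I would analyze a single hash function $H[i]$ and show that $c_i \cdot 2^{m_i}$ is an $\varepsilon$-approximation of $F_0 := |\mathbf{a}_u|$ with some fixed constant probability, say at least $5/6$. Then I would amplify confidence to $1-\delta$ by a Chernoff argument on the median over the $t = O(\log 1/\delta)$ independent copies.

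For the single-hash analysis, let $h = H[i]$ and define, for each $m \in \{0,1,\dots,n\}$, the random variable $T_m := |\{x \in \mathbf{a}_u : h_m(x) = 0^m\}|$. Since $\Hteop(n,n)$ is pairwise independent, linearity and pairwise independence give $\mathbb{E}[T_m] = F_0/2^m$ and $\mathrm{Var}(T_m) \le F_0/2^m$. Let $m^\ast$ be the smallest integer with $F_0/2^{m^\ast} \le \thresh = 96/\varepsilon^2$, so that $F_0/2^{m^\ast-1} > \thresh$. Condition $\mathcal{P}_1$ characterizes $m_i$ as the smallest level at which $T_{m_i} < \thresh$, so $m_i$ will concentrate in a constant-size window around $m^\ast$. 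For each candidate $m$ in this window, Chebyshev's inequality yields
\[
\Pr\!\left[\,|T_m - \mathbb{E}[T_m]| > \tfrac{\varepsilon}{2}\,\mathbb{E}[T_m]\,\right] \;\le\; \frac{\mathrm{Var}(T_m)}{(\varepsilon/2)^2\,\mathbb{E}[T_m]^2} \;=\; \frac{4 \cdot 2^m}{\varepsilon^2 F_0},
\]
which for $m$ near $m^\ast$ is $O(1/\thresh) = O(\varepsilon^2/96)$, i.e.\ a small constant. A union bound over the $O(1)$ relevant levels then shows that, except on an event of probability at most $1/6$, the level $m_i$ satisfies $T_{m_i} \cdot 2^{m_i} \in [F_0/(1+\varepsilon),\,(1+\varepsilon)F_0]$. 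The constants $96$ and $35$ in Algorithm~\ref{alg:hashstream} are tuned precisely so that Chebyshev-plus-union-bound yields this $5/6$ single-estimator guarantee.

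Finally, I would combine the $t$ independent estimators via the median. If each estimator is bad with probability at most $1/6$, then the median fails only if at least $t/2$ of the $t$ estimators are simultaneously bad; a standard Chernoff bound on the sum of independent Bernoullis with mean $\le t/6$ gives failure probability at most $\exp(-\Omega(t)) \le \delta$ for $t = 35\log(1/\delta)$.

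The main technical obstacle is the adaptive choice of $m_i$: since $m_i$ is itself a function of the random hash, one cannot simply apply Chebyshev at a single deterministic level. The resolution is to observe that $\mathcal{P}_1$ pins $m_i$ down to a short window around $m^\ast$ and to union-bound over that window. A secondary subtlety is the boundary case $F_0 < \thresh$, where $m_i = 0$ and the algorithm trivially returns $F_0$ exactly; this needs to be handled as a separate (easy) case so the union bound never has to reach negative levels.
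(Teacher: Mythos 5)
The paper does not actually prove Lemma~\ref{lem:bucket}; it imports the statement wholesale from Bar-Yossef et al.~\cite{BJKST02} and Gibbons--Tirthapura~\cite{GT01}, so there is no in-paper argument to compare against. Your sketch is the standard analysis underlying those cited works, and its overall structure (pairwise independence giving $\mathrm{Var}(T_m)\le \mathbb{E}[T_m]=F_0/2^m$, Chebyshev per level, then median-of-means amplification via Chernoff) is the right one. One imprecision worth flagging: you assert that $\mathcal{P}_1$ forces $m_i$ to "concentrate in a constant-size window around $m^\ast$" and then union-bound over $O(1)$ levels, but that concentration is itself a conclusion of the argument, not a premise. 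The clean version of the step is to bound $\Pr[\text{stop at level } m \text{ and } T_m 2^m \notin (1\pm\varepsilon)F_0]$ for \emph{every} level: for $m\le m^\ast$ the Chebyshev bounds $4\cdot 2^m/(\varepsilon^2 F_0)$ form a geometric series dominated by its top term $O(1/\thresh\cdot\thresh)=O(1)$ --- more precisely summing to at most $8\cdot 2^{m^\ast}/(\varepsilon^2 F_0)\le 8/96$ --- while for $m>m^\ast+O(1)$ one uses Markov on $T_{m-1}\ge\thresh$ to show the algorithm rarely even reaches level $m$. With that repair (and your correctly identified boundary case $F_0<\thresh$), the proof goes through and matches the source arguments.
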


	

To design an algorithm for model counting, based on the bucketing strategy, we turn to the subroutine introduced by Chakraborty, Meel, and Vardi: {\BoundedSAT}, whose properties are formalized as follows:

\begin{proposition}\label{prop:boundedsat}\cite{CMV13b,CMV16}
 There is an algorithm {\BoundedSAT} that gets  $\varphi$ over $n$ variables,  a hash function $h \in \Hteop (n, m)$, and a number $p$ as inputs, returns $\min (p, |\satisfying{\varphi \wedge h(x) = {0^m}}|)$.  If $\varphi$ is a CNF formula, then {\BoundedSAT} makes $\mathcal{O}(p)$ calls to a NP oracle. If $\varphi$ is a DNF formula with $k$ terms, then {\BoundedSAT} takes $\mathcal{O}( n^3  \cdot k  \cdot p)$ time.  
\end{proposition}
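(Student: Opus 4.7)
The plan is to treat the CNF and DNF cases separately because the algorithmic ingredients are quite different, even though the high-level specification (return $\min(p, |\satisfying{\varphi \wedge h(x)=0^m}|)$) is identical. In both cases the first observation is that, since $h \in \Hteop(n,m)$, the constraint $h(x) = 0^m$ is simply a system of $m$ XOR constraints $Ax + b = 0$ over $\mathbb{F}_2$, which can be stored in $O(n)$ bits because $A$ is Toeplitz.

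For the CNF case, I would view the augmented formula $\psi := \varphi \wedge (Ax + b = 0)$ as a CNF+XOR instance, whose satisfiability reduces to SAT via a polynomial-size encoding (add $O(n)$ auxiliary variables per XOR, or simply query an NP oracle that supports XOR constraints natively, as SAT solvers like CryptoMiniSat do). The algorithm then repeatedly invokes the NP oracle in a blocking-clause loop: on iteration $t$, query the satisfiability of $\psi \wedge \bigwedge_{j<t} \neg(x = \sigma_j)$. If satisfiable, extract a witness $\sigma_t$ by the standard self-reduction (one NP query per bit, absorbed into the $O(1)$-per-solution convention used by the paper), store it, and continue; otherwise return $t-1$. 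After obtaining $p$ witnesses we return $p$. The loop makes $O(p)$ NP oracle calls, giving the claimed bound.

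For the DNF case, write $\varphi = T_1 \vee \cdots \vee T_k$. The core idea is that each term $T_i$ fixes a set of literals, so the simultaneous constraint $T_i \wedge (Ax+b=0)$ is a purely linear system in the remaining free variables; by Gaussian elimination in $O(n^3)$ time we decide whether $A_i := \satisfying{T_i \wedge (Ax+b = 0)}$ is empty and, if not, obtain a particular solution plus a basis of its direction, i.e.~a parametric description of $A_i$ as an affine subspace of $\{0,1\}^n$. To avoid double-counting solutions of $\bigcup_i A_i = \satisfying{\varphi \wedge (Ax+b=0)}$, attribute each $\sigma$ to the smallest index $i$ with $\sigma \in A_i$; this partitions the union into blocks $B_i = A_i \setminus \bigcup_{j<i} A_j$. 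The enumeration loop processes $i=1,\ldots,k$ in order: iterate through $A_i$ using its affine representation, and for each $\sigma$ check in $O(nk)$ time whether $\sigma$ satisfies some $T_j$ with $j<i$; if not, output $\sigma$ and increment the counter; stop as soon as the counter hits $p$.

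The main obstacle is the complexity analysis of the DNF enumeration loop, since naively a single $A_i$ can be exponentially large. The clean accounting I would use is: every enumerated $\sigma$ is either a ``good'' one (output and attributed to its first term) or a ``bad'' one (lies in some earlier $A_j$ and was therefore already output). Good enumerations are capped at $p$ by the stopping rule, and each output $\sigma'$ contributes to at most $k-1$ bad enumerations in total (one for each later $A_i$ containing it), so bad enumerations are bounded by $(k-1)p$. Thus the total number of enumerations is $O(kp)$, each costing $O(nk)$ for the membership check, yielding enumeration time $O(nk^2 p)$; combined with $O(kn^3)$ for the $k$ Gaussian eliminations, the overall running time is $O(n^3 k + nk^2 p) = O(n^3 \cdot k \cdot p)$, as stated. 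Correctness follows because the $B_i$'s partition $\satisfying{\varphi \wedge h(x) = 0^m}$, and within each $B_i$ we report exactly those $\sigma$ attributed to $T_i$.
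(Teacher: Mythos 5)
The paper itself gives no proof of this proposition --- it is imported from \cite{CMV13b,CMV16} --- so the only in-paper point of comparison is the proof of Proposition~\ref{prop:findmin}, which rests on the same two ingredients you use: blocking-style NP queries for CNF, and per-term Gaussian elimination on the affine system $T_i \wedge (Ax+b=0)$ for DNF. Your overall architecture is the right one, and your ``good/bad'' charging argument bounding the total number of enumerated assignments by $O(kp)$ is correct: an assignment enumerated from $A_i$ that lies in some earlier $A_j$ must already have been output (since the algorithm only reaches $A_i$ after fully processing $A_{j_0}$ for its first index $j_0<i$), and each of the at most $p$ outputs is re-enumerated at most $k-1$ times.

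The one genuine flaw is the last line of the DNF complexity analysis. You charge $O(nk)$ per enumerated assignment for the membership test against all earlier terms, arrive at $O(n^3k + nk^2p)$, and then assert this equals $O(n^3\cdot k\cdot p)$. That identity requires $k = O(n^2)$, which need not hold (a DNF over $n$ variables can have up to $3^n$ distinct terms), so as written your algorithm can exceed the claimed bound by a factor of roughly $k/n^2$. The fix follows from your own charging argument: since an enumerated assignment is ``bad'' exactly when it has already been output, it suffices to test membership in the set of already-output assignments (at most $p$ strings of length $n$, kept in a trie or sorted structure) at cost $O(n)$ per test instead of $O(nk)$, giving $O(n^3k + nkp)$, comfortably within the claimed $O(n^3\cdot k\cdot p)$. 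A minor point on the CNF case: the $\mathcal{O}(p)$ count presumes an oracle that returns witnesses, since a decision-only NP oracle costs $\Theta(n)$ queries per extracted solution; this is indeed the convention used throughout the paper (compare the $\mathcal{O}(n\cdot\varepsilon^{-2}\log(1/\delta))$ oracle count in the theorem that follows), but your proof should state that it is relying on it rather than ``absorbing'' the factor silently.
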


	
Equipped with Proposition~\ref{prop:boundedsat}, we now turn to designing an  algorithm for model counting based on the Bucketing strategy. The algorithm follows in similar fashion to its streaming counterpart where $m_i$  is iteratively incremented until the number of solutions of the formula ($\varphi \wedge H[i]_{m_i}(x) = 0^{m_i})$ is less than $\thresh$. Interestingly, an approximate model counting algorithm, called {\ApproxMC}, based on bucketing strategy was discovered independently by Chakraborty et. al.~\cite{CMV13b} in 2013. We reproduce an adaptation {\ApproxMC} in Algorithm~\ref{alg:approxmc} to showcase how {\ApproxMC} can be viewed as transformation of the \Bucketing\ algorithm. In the spirit of \Bucketing, {\ApproxMC} seeks to construct a sketch $\mathcal{S}$ of size $t \in \mathcal{O}(\log (1/\delta))$. To this end, for every iteration of the loop, we continue to increment the value of the loop until the conditions specified by the relation $\mathcal{P}_1 (\mathcal{S}, H, \satisfying{\varphi})$ are met. For every iteration $i$, the estimate of the model count is  $c_i \times 2^{m_i}$. Finally, the estimate of the model count is simply the median of the estimation of all the iterations. Since in the context of model counting, we are concerned with time complexity, wherein both $\Hteop$ and $\Hxor$ lead to same time complexity. Furthermore, Chakraborty et al.~\cite{CMV13a} observed no difference in empirical runtime behavior due to $\Hteop$ and $\Hxor$.

 The following theorem establishes the correctness of {\ApproxMC}, and the proof follows from Lemma~\ref{lem:bucket} and Proposition~\ref{prop:boundedsat}.  

\begin{theorem}
Given a formula $\varphi$, $\varepsilon$, and $\delta$, {\ApproxMC} returns
$Est$ such that $\Pr [ \frac{|\satisfying{\varphi}|}{1+\varepsilon} \leq Est \leq (1+\varepsilon)|\satisfying{\varphi}|] \geq 1- \delta$.
If $\varphi$ is a CNF formula, then this algorithm makes $\mathcal{O}(n \cdot \frac{1}{\varepsilon^2} \log (1/\delta))$ calls to NP oracle. If $\varphi$ is a DNF formula then {\ApproxMC} is FPRAS. In particular for a DNF formula with $k$ terms, {\ApproxMC} takes $\mathcal{O}( n^4  \cdot k  \cdot  \frac{1}{\varepsilon^2} \cdot \log (1/\delta))$ time. 
\end{theorem}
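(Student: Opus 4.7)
The plan is to reduce the theorem to two pre-existing results: Lemma~\ref{lem:bucket} (which handles the statistical accuracy of the bucketing sketch) and Proposition~\ref{prop:boundedsat} (which handles the cost of implementing the bucketing primitive on a symbolically represented set). Since the transformation recipe tells us to view $\satisfying{\varphi}$ as playing the role of $\mathbf{a}_u$, the entire proof amounts to verifying that the sketch \ApproxMC\ constructs really is a sketch for which the relation $\mathcal{P}_1(\mathcal{S}, H, \satisfying{\varphi})$ holds, and then accounting for the cost of constructing it.

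For correctness, I would argue iteration-by-iteration. Fix an index $i \in [1,t]$ with $t = \lceil 35 \log(1/\delta)\rceil$, and fix a hash function $H[i] \in \Hteop(n,n)$ drawn independently by \ChooseHashFunctions. \ApproxMC\ increments $m$ starting from some initial value, each time invoking $\BoundedSAT(\varphi, H[i]_m, \thresh)$ with $\thresh = 96/\varepsilon^2$, and halts at the first $m_i$ for which the returned value $c_i$ is strictly less than $\thresh$. By Proposition~\ref{prop:boundedsat}, $c_i = \min(\thresh, |\satisfying{\varphi \wedge H[i]_{m_i}(x)=0^{m_i}}|)$; the halting condition $c_i < \thresh$ therefore gives the true size $|\satisfying{\varphi} \cap \{x : H[i]_{m_i}(x)=0^{m_i}\}| < 96/\varepsilon^2$, and the fact that we did not halt at $m_i-1$ gives $|\satisfying{\varphi} \cap \{x : H[i]_{m_i-1}(x)=0^{m_i-1}\}| \ge 96/\varepsilon^2$. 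These are exactly the two clauses of $\mathcal{P}_1$ with $\mathbf{a}_u$ replaced by $\satisfying{\varphi}$. Since the $H[i]$ are independent samples from the $2$-wise independent family $\Hteop(n,n)$, Lemma~\ref{lem:bucket} applies verbatim to the multiset $\satisfying{\varphi}$, and the median $Est = \mbox{Median}_i \{c_i \cdot 2^{m_i}\}$ is an $(\varepsilon,\delta)$-estimate of $|\satisfying{\varphi}|$.

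For the complexity, I would bound the work per outer iteration and multiply by $t = O(\log(1/\delta))$. The inner loop terminates at some $m_i \le n$, because at $m = n$ the hash function $H[i]_n$ is injective with high probability so each cell contains at most one solution, which is below $\thresh$. Thus the inner loop makes at most $n$ calls to $\BoundedSAT$, each with parameter $p = \thresh = 96/\varepsilon^2$. In the CNF case, Proposition~\ref{prop:boundedsat} says each such call uses $O(p) = O(1/\varepsilon^2)$ NP queries, giving a total of $O(n \cdot \frac{1}{\varepsilon^2} \log(1/\delta))$ NP oracle calls. In the DNF case with $k$ terms, each call takes $O(n^3 k p) = O(n^3 k / \varepsilon^2)$ time, giving a total running time of $O(n^4 k \cdot \frac{1}{\varepsilon^2} \log(1/\delta))$, which is polynomial in $n$, $k$, $1/\varepsilon$, and $\log(1/\delta)$, i.e.\ an FPRAS.

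The only real subtlety, and the one place where one has to pay attention rather than merely cite, is the verification that the sketch really falls under the hypothesis of Lemma~\ref{lem:bucket}. The lemma is stated for a multiset $\mathbf{a}$ and a sketch produced by a streaming process, whereas here the sketch is produced by symbolic counting via $\BoundedSAT$. The key observation that bridges the two is that Lemma~\ref{lem:bucket} is purely a statement about the joint distribution of $(H, \mathcal{S})$ given that $\mathcal{P}_1$ holds; it does not care \emph{how} $\mathcal{S}$ was obtained. Thus once correctness of $\mathcal{P}_1$ is established from the halting condition of the inner loop, the lemma delivers the probabilistic guarantee with no further work, and the two proposition/lemma pair compose cleanly to yield the theorem.
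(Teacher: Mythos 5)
Your proposal is correct and follows exactly the route the paper intends: the paper's entire proof of this theorem is the single remark that ``the proof follows from Lemma~\ref{lem:bucket} and Proposition~\ref{prop:boundedsat},'' and your write-up simply fills in the details of that composition (verifying $\mathcal{P}_1$ from the halting condition of the inner loop, then multiplying the per-call cost from Proposition~\ref{prop:boundedsat} by the at most $n$ inner and $t=O(\log(1/\delta))$ outer iterations). The only cosmetic quibble is your appeal to injectivity of $H[i]_n$ to bound the inner loop; the cleaner justification is simply that $m_i$ cannot exceed $n$ because the hash range is $\{0,1\}^n$, but this does not affect correctness.
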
 

\begin{algorithm}
	\caption{$\ApproxMC (\varphi,\varepsilon, \delta)$}
	\label{alg:approxmc}
	\begin{algorithmic}[1]
		\State $t \gets  35 \log (\frac{1}{\delta})$
		\State $H \gets \PickHashFunctions(\Hteop(n,n),t)$
		\State $\mathcal{S} \gets \{\}$;  
		\State $\thresh \gets \frac{96}{\varepsilon^2}$
		\For {$i \in  [1, t] $}\label{line:iteration-loop-begin}
			\State $m_i \gets 0$
			\State $c_i \gets \BoundedSAT(\varphi,  H[i]|_{m_i}, \thresh)$
			\While{$c_i \geq \thresh$ } \label{line:search-loop-begin}
				\State $m_i \gets m_i+1$
				\State $c_i \gets \BoundedSAT(\varphi, H[i]|_{m_i}(x), \thresh)$
			
			\EndWhile  \label{line:search-loop-end}
		\State $\mathcal{S}[i] \gets (c_i, m_i)$
		\EndFor \label{line:iteration-loop-end}
		\State $Est \gets Median(\{ \mathcal{S}[i](0)\times 2^{\mathcal{S}[i](1)}\}_{i})$
		\State \Return $Est$
	\end{algorithmic}
\end{algorithm}

\subsubsection*{Further Optimizations}
We now discuss how the setting of model counting allows for further optimizations. 
Observe that for all $i$, $\satisfying{\varphi \wedge (H[i]_{m_i-1})(x) = 0^{m_i-1}} \subseteq \satisfying{\varphi \wedge (H[i]_{m_i})(x) = 0^{m_i}}$. Note that we are interested in finding the value of $m_i$ such that $|\satisfying{\varphi \wedge (H[i]_{m_i-1})(x) = 0^{m_i-1}} | \geq \frac{96}{\varepsilon^2}$ and $|\satisfying{\varphi \wedge (H[i]_{m_i})(x) = 0^{m_i}} | < \frac{96}{\varepsilon^2}$, therefore, we can perform a binary search for $m_i$ instead of a linear search performed in the loop~\ref{line:search-loop-begin}--~\ref{line:search-loop-end}. Indeed, this observation was at the core of Chakraborty et al's followup work~\cite{CMV16}, which proposed the ApproxMC2, thereby reducing the number of calls to NP oracle from  $\mathcal{O}(n \cdot \frac{1}{\varepsilon^2} \log (1/\delta))$ to  $\mathcal{O}(\log n \cdot \frac{1}{\varepsilon^2} \log (1/\delta))$. Furthermore, the reduction in NP oracle calls led to significant runtime improvement in practice. It is worth commenting that the usage of {\ApproxMCTwo} as FPRAS for DNF is shown to achieve runtime efficiency over the alternatives based on Monte Carlo methods~\cite{MSV17,MSV18,MSV19}.



\subsection{Minimum-based Algorithm}\label{sec:minimum}
For a given multiset $\mathbf{a}$
( eg: a data stream or solutions to a model), we now specify the property $\mathcal{P}_2(\mathcal{S}, H, \mathbf{a}_{u})$. 
The sketch $\mathcal{S}$ is an array of sets indexed by members of $H$ that holds lexicographically $p$ minimum elements of $H[i](\mathbf{a}_u)$ where $p$ is $\min(\frac{96}{\varepsilon^2}, |\mathbf{a}_{u}| )$.  $\mathcal{P}_2$ is the property that specifies this relationship.  More formally, the relationship $\mathcal{P}_2$ holds, if the following conditions are met.
\begin{enumerate}
	\item $\forall i, |\mathcal{S}[i]|  = \min(\frac{96}{\varepsilon^2}, |\mathbf{a}_{u}| )$
	\item $\forall i, \forall y \notin \mathcal{S}[i],  \forall y' \in \mathcal{S}[i] \text{ it holds that } H[i](y') \preceq H[i](y)$
\end{enumerate}

The following lemma due to Bar-Yossef {\it et al.}~\cite{BJKST02} establishes the relationship between the property $\mathcal{P}_2$ and the number of distinct elements of a multiset. Let $\max(S_i)$ denote the largest element of the set $S_i$.

\begin{lemma}\cite{BJKST02}\label{lem:bjkstmin}
Let $\mathbf{a} \subseteq \{0,1\}^n$ be a multiset and $H \subseteq \Hteop(n,m)$ where $m=3n$ and each $H[i]$s  are independently drawn and $|H|=O(\log 1/\delta)$ and let $\mathcal{S}$ be such that the $\mathcal{P}_2 (\mathcal{S}, H, a_u)$ holds. Let $c = \mbox{\rm Median }\{{p\cdot 2^m \over \max (S[i])}\}_i$. Then 
\[\Pr \left[ \frac{|\mathbf{a}_u|}{(1+\varepsilon)} \leq c
\leq (1+\varepsilon)|\mathbf{a}_u|\right]\geq 1- \delta.\]
\end{lemma}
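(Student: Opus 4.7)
}
The plan is to prove a ``single-copy'' approximation guarantee with constant success probability $\geq 3/4$, and then use the standard median-boosting over the $O(\log 1/\delta)$ independent hash functions in $H$. Throughout, write $F = |\mathbf{a}_u|$ and $v_i = \max(\mathcal{S}[i])$. If $F \leq p = 96/\varepsilon^2$ the sketch condition forces $|\mathcal{S}[i]| = F$, so $\mathcal{S}[i] = H[i](\mathbf{a}_u)$, and with range $2^m = 2^{3n}$ a Markov/union bound on pairwise collision events $\{H[i](y)=H[i](y')\}$ (each of probability $2^{-3n}$) shows that with probability at least $1 - \binom{F}{2} 2^{-3n}$ the values in $\mathcal{S}[i]$ are all distinct; in that event $v_i$ is the maximum of $F$ uniform values in $\{0,1\}^{3n}$ and an elementary computation shows $p \cdot 2^m / v_i = F$ with overwhelming probability. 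So I focus on the interesting case $F > p$.

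For a single $i$, the key reduction is that $v_i \leq T$ iff the number of distinct elements $y \in \mathbf{a}_u$ with $H[i](y) \leq T$ is at least $p$. Concretely, set $Z(T) = \sum_{y \in \mathbf{a}_u} \mathbb{1}[H[i](y) \leq T]$, so $\{v_i \leq T\} = \{Z(T) \geq p\}$. Using the (near-)uniformity of each coordinate of $H[i]$, $\Pr[H[i](y) \leq T] = (T+1)/2^m$, so $E[Z(T)] = F(T+1)/2^m$; by pairwise independence of $H[i]$, $\mathrm{Var}(Z(T)) \leq E[Z(T)]$. Choose the two thresholds
\[
T_{\text{hi}} = \frac{p\,2^m (1+\varepsilon)}{F} - 1, \qquad T_{\text{lo}} = \frac{p\,2^m}{F(1+\varepsilon)} - 1,
\]
so $E[Z(T_{\text{hi}})] = p(1+\varepsilon)$ and $E[Z(T_{\text{lo}})] = p/(1+\varepsilon)$. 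Chebyshev gives
\[
\Pr[v_i > T_{\text{hi}}] = \Pr[Z(T_{\text{hi}}) < p] \leq \frac{p(1+\varepsilon)}{(\varepsilon p)^2} = \frac{1+\varepsilon}{\varepsilon^2 p},
\]
and analogously $\Pr[v_i < T_{\text{lo}}] \leq (1+\varepsilon)/(\varepsilon^2 p)$. With $p = 96/\varepsilon^2$ and $\varepsilon \in (0,1]$ this sums to at most $4/96 = 1/24$. Unwinding, when $T_{\text{lo}} \leq v_i \leq T_{\text{hi}}$, the single-copy estimator $Y_i = p\,2^m/v_i$ satisfies $F/(1+\varepsilon) \leq Y_i \leq (1+\varepsilon) F$, so $Y_i$ is an $(\varepsilon)$-approximation with probability at least $23/24$.

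Finally, the $|H| = \Theta(\log 1/\delta)$ copies $Y_1,\dots,Y_t$ are independent because the $H[i]$'s are drawn independently from $\Hteop(n,3n)$. Let $B_i = \mathbb{1}[Y_i \text{ is not an }(\varepsilon)\text{-approximation of }F]$; then $\Pr[B_i = 1] \leq 1/24$, so $E[\sum_i B_i] \leq t/24$. For the median $c$ to fail to be an $(\varepsilon)$-approximation, more than half of the $B_i$ must equal $1$, i.e.\ $\sum_i B_i > t/2$. A Chernoff bound on the sum of independent $\{0,1\}$ variables with mean $\leq t/24$ yields $\Pr[\sum_i B_i > t/2] \leq \delta$ for $t = 35\log(1/\delta)$ (any comfortable constant works). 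This gives the claimed bound.

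The main obstacle is the pairwise-independence variance computation combined with getting the right choice of $T_{\text{hi}}, T_{\text{lo}}$ so that both tail events are controlled by the same $p = \Theta(1/\varepsilon^2)$; the $2^m = 2^{3n}$ range is slightly subtle and is used purely to rule out hash collisions among the distinct elements of $\mathbf{a}_u$ (whose number is at most $2^n$), ensuring that ``lexicographically smallest $p$ values of $H[i](\mathbf{a}_u)$'' really correspond to $p$ distinct preimages. Everything else is a routine Chebyshev-plus-median boosting argument.
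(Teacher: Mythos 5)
First, note that the paper does not actually prove Lemma~\ref{lem:bjkstmin}: it is imported verbatim from Bar-Yossef et al.~\cite{BJKST02}, so there is no in-paper argument to compare against. Judged on its own terms, your treatment of the main case $F=|\mathbf{a}_u|>p$ is correct and is exactly the standard KMV analysis from the cited source: the reduction $\{v_i\le T\}=\{Z(T)\ge p\}$, the expectation/variance computation under pairwise independence, Chebyshev at the two thresholds $T_{\mathrm{hi}},T_{\mathrm{lo}}$ giving per-copy failure probability $\le 2(1+\varepsilon)/(\varepsilon^2 p)\le 1/24$ for $p=96/\varepsilon^2$, and Chernoff-based median boosting over the $t=35\log(1/\delta)$ independent copies. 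Your closing remark that the range $2^{3n}$ is there only to kill collisions among the at most $2^n$ distinct preimages (so that the $p$ smallest \emph{distinct} hash values really correspond to $p$ distinct elements) is also the right observation and is worth keeping.

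The genuine gap is the case $F\le p$. Your claim that, once $\mathcal{S}[i]=H[i](\mathbf{a}_u)$ is collision-free, ``an elementary computation shows $p\cdot 2^m/v_i=F$ with overwhelming probability'' is false: the ratio estimator is not concentrated when $F$ is small. Take $F=1$, so $p=\min(96/\varepsilon^2,F)=1$ and the estimator is $2^m/H[i](x)$ for a single uniformly distributed value $H[i](x)$; this exceeds $2$ with probability essentially $1/2$, so the median over $i$ fails to be a $(1+\varepsilon)$-approximation of $1$ with constant probability. More generally, for $F\ll p$ the maximum of $F$ uniform values has relative fluctuation of order $1/F$, not $O(\varepsilon)$. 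The correct resolution — and the one actually used in \cite{BJKST02} — is that in this regime the sketch determines $F$ \emph{exactly}: condition (1) of $\mathcal{P}_2$ forces $|\mathcal{S}[i]|=F<\thresh$, so the algorithm should output the count $|\mathcal{S}[i]|$ rather than the ratio $p\cdot 2^m/\max(\mathcal{S}[i])$. (This is also the only reading under which the lemma as stated in the paper is literally true; the statement is loose on this point, but your proof should not paper over it with a false concentration claim.) With that case repaired, the rest of your argument goes through.
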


Therefore, we can transform the \MinValue\ algorithm for $F_0$ estimation to that of model counting given access to a subroutine that can compute $\mathcal{S}$ such that $\mathcal{P}_2(\mathcal{S}, H, \satisfying{\varphi})$ holds true. The following proposition establishes the existence and complexity of such a  subroutine, called {\FindMin}:

\begin{proposition}\label{prop:findmin}
There is an algorithm {\FindMin} that, given $\varphi$ over $n$ variables, $h \in \Hteop(n,m)$, and $p$ as input, returns a set, $\mathcal{B} \subseteq h(\satisfying{\varphi})$ so that if   $|h(\satisfying{\varphi})| \leq p$, then $\mathcal{B} =h(\satisfying{\varphi})$, otherwise $\mathcal{B}$ is the $p$ lexicographically minimum elements of $h(\satisfying{\varphi})$.
Moreover, 
if $\varphi$ is a CNF formula, then {\FindMin} makes $\mathcal{O}(p\cdot m)$ calls to an NP oracle, and if $\varphi$ is a DNF formula with $k$ terms, then {\FindMin} takes $\mathcal{O}(m^3 \cdot n \cdot k \cdot p)$ time. 
\end{proposition}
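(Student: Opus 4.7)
The plan is to extract the $p$ lexicographically smallest hash values one at a time by a bit-by-bit prefix search of length $m$, accumulating them into $\mathcal{B}$ and excluding them from subsequent searches. First, I would design a subroutine $\mathrm{LexMinExcluding}(\varphi, h, \mathcal{B})$ that returns the lex-smallest element of $h(\satisfying{\varphi}) \setminus \mathcal{B}$, or reports the set is empty. Its logic is greedy: starting from the empty prefix, for each coordinate $j = 1, \ldots, m$, attempt to extend the current prefix $b_1 \cdots b_{j-1}$ with $0$ by asking whether there exists $x$ satisfying $\varphi(x)$, $h_j(x) = (b_1 \cdots b_{j-1} \cdot 0)$, and $h(x) \notin \mathcal{B}$. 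If the check succeeds, set $b_j = 0$; otherwise set $b_j = 1$. After $m$ bits are determined, the string $b_1 \cdots b_m$ is the desired minimum. The routine $\FindMin$ then calls $\mathrm{LexMinExcluding}$ up to $p$ times, appending each returned value to $\mathcal{B}$ and halting when a call reports empty. Correctness follows by induction on iteration: if $\mathcal{B}$ already contains the $i-1$ smallest elements of $h(\satisfying{\varphi})$, then the $i$-th call returns the $i$-th smallest when one exists, and otherwise returns ``empty,'' yielding the specified output.

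For the CNF case, each feasibility check within a prefix search reduces to a single NP oracle call on a CNF-expressible formula: the hash-equality constraint $h_j(x) = \alpha$ is a system of $j$ $\mathbb{F}_2$-linear equations in the bits of $x$, each easily encoded as a constant-size CNF gadget; and each exclusion $h(x) \neq y'$ becomes the disjunction of $m$ such (negated) equalities. Since every invocation of $\mathrm{LexMinExcluding}$ issues $m$ oracle calls and we invoke it up to $p$ times, the total is $\mathcal{O}(p \cdot m)$ NP calls.

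The heart of the argument, and the main obstacle, is the DNF case: we need every feasibility check to run in polynomial time despite the exclusion constraints $h(x) \neq y'$ destroying the pure-DNF structure of $\varphi$. My plan is to handle satisfiability term by term. For each term $T_\ell$ of $\varphi$, the literals of $T_\ell$ restrict $x$ to a subcube; composing with the affine map $h$ gives an affine flat $V_\ell \subseteq \{0,1\}^m$ computable via Gaussian elimination over $\mathbb{F}_2$. Intersecting $V_\ell$ with the prefix hyperplane $h_j(x) = \alpha$ keeps the result affine, and we then only need to decide whether this intersection contains a point outside $\mathcal{B}$. Since $|\mathcal{B}| \leq p-1$, the answer is immediately ``yes'' whenever the intersection has more than $p-1$ points, and otherwise we can enumerate its (at most $p-1$) points explicitly and test each. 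Each per-term check costs $\mathcal{O}(m^2 n + p)$ time; summing over all $k$ terms, $m$ prefix bits per minimum, and $p$ outer iterations yields the advertised $\mathcal{O}(m^3 \cdot n \cdot k \cdot p)$ bound.
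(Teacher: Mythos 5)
Your argument is correct and reaches the stated bounds, but your DNF case takes a genuinely different route from the paper's. The paper processes each term $T_\ell$ \emph{separately}: since $\{h(x) : x \models T_\ell\}$ is the image of an affine map, it extracts the $p$ smallest elements of that one term by repeatedly computing the successor of the last-found value $y$ (flip the rightmost $0$ of $y$, prefix-search for the smallest extension of $y_1\cdots y_{r-1}1$, fall back to earlier zeros on failure), and only at the end merges the $k$ sorted lists of length $p$. Within a single affine image, ``smallest element greater than $y$'' is directly a prefix query, so no exclusion set is ever needed. You instead run one global prefix search over $\bigcup_\ell V_\ell$ and carry the constraint $h(x)\notin\mathcal{B}$, which you discharge per term via the observation that the feasible set is an affine flat --- either it has more than $|\mathcal{B}|$ points (answer yes) or it is small enough to enumerate. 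Both decompositions work; the paper's keeps every subproblem a pure affine feasibility question at the cost of a final merge, while yours avoids the merge but pays for enumeration. One point of slack: your enumeration costs on the order of $p\cdot m$ per term per bit position (up to $p-1$ candidate $m$-bit points to generate and test), so the total is $O(m^3 n k p + m^2 k p^2)$ rather than the claimed $O(m^3 n k p)$; the extra term matters only when $p \gg mn$ and the algorithm is still polynomial, but your stated per-check cost of $O(m^2 n + p)$ undercounts this. Your CNF argument matches the paper's up to the cosmetic difference that the paper phrases the oracle as a successor query ($\exists x,y''$ with $y'y''>y$ and $h(x)=y'y''$) rather than as membership outside $\mathcal{B}$; both give $O(p\cdot m)$ calls.
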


Equipped with Proposition~\ref{prop:findmin}, we are now ready to present the algorithm, called {\ApproxModelCountMin}, for model counting. Since the complexity of {\FindMin} is PTIME when $\varphi$ is in DNF, we have {\ApproxModelCountMin} as a FPRAS for DNF formulas. 

\begin{theorem}\label{lm:approxmin}
	Given $\varphi$, $\varepsilon$,$ \delta$, {\ApproxModelCountMin} returns $c$ such that 
	\[\Pr \left( \frac{|\satisfying{\varphi}}{1+\varepsilon} \leq Est \leq (1+\varepsilon)|\satisfying{\varphi}|\right) \geq 1- \delta.\]
	  If $\varphi$ is a CNF formula, then {\ApproxModelCountMin} is a polynomial-time algorithm that makes $\mathcal{O}(\frac{1}{\varepsilon^2}  n \log (\frac{1}{\delta}))$ calls to NP oracle. If $\varphi$ is a DNF formula, then {\ApproxModelCountMin} is an FPRAS. 
\end{theorem}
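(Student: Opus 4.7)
The plan is to mirror the \MinValue{} streaming algorithm directly in the model-counting setting, substituting the \FindMin{} subroutine for the online sketch-update. Concretely, \ApproxModelCountMin{} would set $\thresh = 96/\varepsilon^2$ and $t = 35\log(1/\delta)$, independently sample $H[1], \ldots, H[t]$ from $\Hteop(n, 3n)$, compute $\mathcal{S}[i] \gets \FindMin(\varphi, H[i], \thresh)$ for each $i \in [t]$, and return $\mathrm{Median}\bigl(\bigl\{\thresh \cdot 2^{3n}/\max(\mathcal{S}[i])\bigr\}_i\bigr)$. This is precisely the offline analogue of the \MinValue{} sketch, with \FindMin{} replacing the streaming minima-maintenance step.

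For correctness, the first step is to verify that the sketch $\mathcal{S}$ assembled this way satisfies the relation $\mathcal{P}_2(\mathcal{S}, H, \satisfying{\varphi})$ introduced in Section~\ref{sec:minimum}. By Proposition~\ref{prop:findmin}, each $\mathcal{S}[i]$ is exactly the set of $\min(\thresh, |H[i](\satisfying{\varphi})|)$ lexicographically smallest elements of $H[i](\satisfying{\varphi})$, which matches the structural requirement on the \MinValue{} sketch. Since the multiset $\mathbf{a}_u$ in Lemma~\ref{lem:bjkstmin} plays exactly the role of $\satisfying{\varphi}$ here, Lemma~\ref{lem:bjkstmin} instantiated with $m = 3n$ immediately yields the claimed $(\varepsilon,\delta)$-approximation on the median.

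For complexity, I would simply plug $p = \thresh = 96/\varepsilon^2$ and $m = 3n$ into the per-call bounds of Proposition~\ref{prop:findmin}: the CNF case costs $\mathcal{O}(p\cdot m) = \mathcal{O}(n/\varepsilon^2)$ NP queries per invocation, and the DNF case costs $\mathcal{O}(m^3 \cdot n \cdot k \cdot p) = \mathcal{O}(n^4 k / \varepsilon^2)$ time per invocation for a $k$-term formula. Multiplying by the $t = \mathcal{O}(\log(1/\delta))$ independent repetitions gives $\mathcal{O}(\tfrac{1}{\varepsilon^2}\, n \log(1/\delta))$ NP queries in the CNF case and $\mathcal{O}(\tfrac{1}{\varepsilon^2}\, n^4 k \log(1/\delta))$ time in the DNF case, proving the FPRAS claim for DNF.

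The main subtlety, and the step I would take most care with, is the translation between ``lexicographically smallest elements of $H[i](\satisfying{\varphi})$'' (what \FindMin{} returns) and the implicit injectivity assumption underlying $\mathcal{P}_2$ in the streaming analysis, so that counting distinct images is the same as counting distinct solutions. Because the range of $H[i]$ has size $2^{3n}$ while $|\satisfying{\varphi}| \le 2^n$, a standard pairwise-independence union bound over pairs of solutions shows that $H[i]$ restricted to $\satisfying{\varphi}$ is injective except on a failure event of probability at most $2^{-n}$, which can be folded into the overall $\delta$ budget. This confirms that the hypothesis of Lemma~\ref{lem:bjkstmin} applies verbatim to $\satisfying{\varphi}$ and closes the proof.
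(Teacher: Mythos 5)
Your proposal matches the paper's own argument: the paper states this theorem as an immediate consequence of Lemma~\ref{lem:bjkstmin} (which certifies the median-of-minima estimator once $\mathcal{P}_2(\mathcal{S},H,\satisfying{\varphi})$ holds) together with Proposition~\ref{prop:findmin} supplying the sketch offline and the per-invocation costs, exactly as you do. Your added remark on injectivity of $H[i]$ over $\satisfying{\varphi}$ (justifying the choice $m=3n$) is a detail the paper leaves implicit inside the cited lemma of Bar-Yossef et al., and it checks out.
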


\begin{algorithm}
		\caption{$\ApproxModelCountMin (\varphi,\varepsilon, \delta)$}
		\label{alg:approxmin}
		\begin{algorithmic}[1]
			\State $t \gets 35\log (1/\delta)$
			\State $H\gets \PickHashFunctions (\Hteop(n,3n),t)$
			\State $S \gets \{\}$
			\State $\thresh \gets \frac{96}{\varepsilon^2}$
			\For {$i \in  [1, t] $}
				\State $S[i] \gets \FindMin(\varphi, H[i], \thresh)$
			\EndFor	
			\State $Est \gets {\rm Median}\left( \left\{ \frac{\thresh \times 2^{3n}}{\max\{S[i]\} } \right\}_{i} \right)$
			\State \Return $Est$
		\end{algorithmic}
\end{algorithm}

\subsubsection{Implementing the Min-based Algorithm}

We now give a proof of Proposition~\ref{prop:findmin} by giving an implementation of  {\FindMin} subroutine.

\begin{proof}

We first present the algorithm when the formula $\varphi$ is a DNF formula. Adapting the algorithm for the case of CNF can be done by suing similar ideas. 

Let $\phi = T_1\vee T_2\vee\ \cdots \vee T_k$ be a DNF formula over
$n$ variables where $T_i$ is a term. Let $h:\{0,1\}^n\rightarrow
\{0,1\}^{m}$  be a linear hash function in $\mathcal{H}_{xor}(n,m)$
defined by a $m\times n$ binary matrix $A$.  Let $\mathcal{C}$ be the set of hashed values of the
satisfying assignments for $\varphi$: $\mathcal{C} = \{h(x) \mid x \models \varphi\}
\subseteq \{0,1\}^m$. Let $\mathcal{C}_{p}$ be the first $p$ elements of $\mathcal{C}$ in the lexicographic order. Our goal is to compute $\mathcal{C}_{p}$. 


We will give an algorithm with running time $O(m^3np)$ 
to compute $\mathcal{C}_p$ when the formula is just a
term $T$. Using this algorithm we can compute $\mathcal{C}_p$
for a formula with $k$ terms by iteratively merging $\mathcal{C}_p$ for each term. The time
complexity increases by a factor of $k$, resulting in an $O(m^3nkp)$ time algorithm.

Let $T$ be a term with width $w$ (number of literals) and $\mathcal{C} = \{Ax
\mid x \models T\}$. By fixing the variables in $T$ we get a vector $b_T$ and an $N\times
(n-w)$ matrix $A_T$ so that $\mathcal{C} = \{A_Tx + b_T \mid x\in
\{0,1\}^{(n-w)}\}$. Both $A_T$ and $b_T$ can be computed
from $A$ and $T$ in linear time. Let  $h_T(x)$ be the transformation $A_Tx + b_T$. 

We will compute $\mathcal{C}_p$ ($p$ lexicographically minimum elements in $\mathcal{C}$) iteratively as follows: assuming we have computed $(q-1)^{th}$ minimum of $\mathcal{C}$, we will compute $q^{th}$ minimum using a prefix-searching strategy. We will use  a subroutine to solve the following basic prefix-searching primitive: Given any $l$ bit string $y_1\ldots y_l$, is there an $x\in \{0,1\}^{n-w}$ so that $y_1\ldots y_l$ is a prefix for some string in $\{h_T(x)\}$? This task can be performed using Gaussian elimination over an $(l+1)\times (n-w)$ binary matrix and can be implemented in time $O(l^2(n-w))$.  

Let $y=y_1\ldots y_m$ be the $(q-1)^{th}$ minimum in $\mathcal{C}$. Let $r_1$ be the rightmost 0 of $y$. Then using the above mentioned procedure we can find the lexicographically smallest string in the range of $h_T$ that extends $y_1\ldots y_{(r-1)}1$ if it exists. If no such string exists in $\mathcal{C}$,  find the index of the next   0 in $y$ and repeat the procedure.  In this manner the $q^{th}$ minimum can be computed using $O(m)$ calls to the prefix-searching primitive resulting in an $O(m^3n)$ time algorithm. Invoking the above procedure $p$ times results in an algorithm to compute $\mathcal{C}_p$ in $O(m^3np)$ time.  

If $\varphi$ is a CNF formula, we can employ the same prefix searching strategy. Consider the following NP oracle: $O=\{\langle \varphi, h, y, y' \rangle \mid \exists x, \exists y'', \mbox{ so that } x \models \varphi, y'y'' > y,  h(x) = y'y'' \}$. With $m$ calls to $O$, we can compute string in $\mathcal{C}$ that is lexicographically greater than $y$. So with $p\cdot m$, calls to $O$, we can compute $\mathcal{C}_p$. 

\end{proof}

\subsubsection*{Further Optimizations}
As mentioned in Section~\ref{sec:introduction}, the problem of model counting has witnessed a significant interest from practitioners owing to its practical usages and the recent developments have been fueled by the breakthrough progress in the SAT solving wherein calls to NP oracles are replaced by invocations of SAT solver in practice. Motivated by the progress in SAT solving, there has been significant interest in design of efficient algorithmic frameworks for related problems such as MaxSAT and its variants. The state of the art MaxSAT based on sophisticated strategies such as implicit hitting sets and are shown to significant outperform algorithms based on merely invoking a SAT solver iteratively.  Of particular interest to us is the recent progress in the design of MaxSAT solvers to handle lexicographic objective functions. In this context, it is worth remarking that we expect practical implementation of {\FindMin} would invoke a MaxSAT solver $\mathcal{O}(p)$ times.


\subsection{Estimation-based Algorithm}\label{sec:estim}

We now adapt the \Estimation\ algorithm to model counting. For a given stream $\mathbf{a}$ and chosen hash functions $H$,  the sketch $\mathcal{S}$ corresponding to the estimation-based algorithm satisfies the following relation $\mathcal{P}_3(\mathcal{S}, H, \mathbf{a}_u)$:
 
 \begin{align}
 \mathcal{P}_{3}(\mathcal{S}, H, \mathbf{a}_u) := \left(S[i,j] = \max_{x \in \mathbf{a}_{u}} \TrailZero(H[i,j])(x)\right)
 \end{align}
where the procedure $\TrailZero(z)$ is the length of the longest all-zero suffix of $z$. Bar-Yossef {\it et al.} \cite{BJKST02} show the following relationship between the property $\mathcal{P}_3$ and $F_0$.

\begin{lemma}\cite{BJKST02}
Let $\mathbf{a} \subseteq \{0,1\}^n$ be a multiset. For $i \in [T]$ and $j \in [M]$, suppose $H[i,j]$ is drawn independently from $\mathcal{H}_{s-{\rm wise}}(n,n)$ where $s = O(\log(1/\varepsilon))$, $T = O(\log(1/\delta))$, and $M = O(1/\varepsilon^2)$. Let $H$ denote the collection of these hash functions. Suppose $\mathcal{S}$ satisfies $\mathcal{P}_3(\mathcal{S}, H, \mathbf{a}_u)$. For any integer $r$, define:
\[
c_r = \mathrm{Median}\left\{\frac{\ln\left(1-\frac{1}{M}\sum_{j=1}^M \mathbb{1}\{\mathcal{S}[i,j]\geq r\}\right)}{\ln(1-2^{-r})}\right\}_i
\]
Then, if $2F_0 \leq 2^r \leq 50F_0$:
\[
\Pr\left[(1-\varepsilon)F_0 \leq c_r \leq (1+\varepsilon)F_0\right] \geq 1-\delta
\]
\end{lemma}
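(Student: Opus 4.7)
The plan is to prove accuracy of the estimator at a single iteration $i$ with constant probability and then amplify via the median over the $T$ independent iterations. Fix $r$ satisfying $2 F_0 \leq 2^r \leq 50 F_0$, and for each $j \in [M]$ define the indicator $X_j = \mathbb{1}\{\mathcal{S}[i,j] \geq r\}$, which equals $1$ iff there exists $x \in \mathbf{a}_u$ with $H[i,j](x) \equiv 0 \pmod{2^r}$. Write $p = \Pr[X_j = 1]$, $q = 1-p$, and set $q^{\star} = (1 - 2^{-r})^{F_0}$ for the value $q$ would take under a truly uniform hash. Since the iteration-$i$ estimator is $\hat F_{0,i} = \ln \hat q_i / \ln(1 - 2^{-r})$ with $\hat q_i = 1 - M^{-1} \sum_j X_j$, and since $\ln q^{\star} / \ln(1 - 2^{-r}) = F_0$ exactly, the task reduces to bounding $|\hat q_i - q^{\star}|$ and then propagating this error through the nonlinear post-processing.

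The first step is a Bonferroni argument that exploits $s$-wise independence. Expanding
\[
p = \Pr\!\left[\,\bigcup_{x \in \mathbf{a}_u} \{H[i,j](x) \equiv 0 \bmod 2^r\}\,\right]
\]
by inclusion-exclusion and using that for every $S \subseteq \mathbf{a}_u$ with $|S| \leq s$ the probability $\Pr[\forall x \in S,\ H[i,j](x) \equiv 0]$ equals exactly $2^{-r|S|}$, the truncation-at-$s$ Bonferroni inequalities yield
\[
|q - q^{\star}| \leq \binom{F_0}{s} 2^{-rs} \leq (F_0 \cdot 2^{-r})^{s}/s! \leq 2^{-s}/s!,
\]
using $2^r \geq 2F_0$ in the last inequality. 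Setting $s = \Theta(\log(1/\varepsilon))$ with a suitable constant drives this discrepancy to $O(\varepsilon^2)$.

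Next, since the hash functions $H[i,j]$ are drawn independently across $j$, the variables $X_{i,1},\ldots,X_{i,M}$ are fully independent Bernoullis each with variance at most $1/4$, so Chebyshev gives $|M^{-1} \sum_j X_j - p| \leq \varepsilon/c_0$ with constant probability for any chosen constant $c_0$, using $M = \Theta(1/\varepsilon^2)$. Combining this with the Bonferroni step yields $|\hat q_i - q^{\star}| = O(\varepsilon)$ with constant probability. To push this into a multiplicative error on $F_0$, observe that the range assumption $2F_0 \leq 2^r \leq 50 F_0$ pins $q^{\star}$ into a constant interval bounded away from $\{0,1\}$ (roughly $[e^{-1/2}, e^{-1/50}]$), so $\ln$ is Lipschitz near $q^{\star}$ and $|\ln \hat q_i - \ln q^{\star}| = O(|\hat q_i - q^{\star}|)$. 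Since $\ln(1 - 2^{-r}) = -2^{-r}(1 + O(2^{-r}))$ has magnitude $\Theta(F_0^{-1})$, dividing by it amplifies an $O(\varepsilon)$ additive error in the logarithm into an $O(\varepsilon F_0)$ additive, hence $O(\varepsilon)$-multiplicative, error in the estimate of $F_0$. A standard median-of-means argument boosts constant success to confidence $1-\delta$ using $T = \Theta(\log(1/\delta))$ independent iterations.

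The hardest part is keeping the three error sources coherent so that the $\Theta(F_0)$ amplification introduced by dividing by the small quantity $\ln(1 - 2^{-r})$ does not overwhelm the $O(\varepsilon)$ additive accuracy afforded by Chebyshev. This is precisely where the range hypothesis $2 F_0 \leq 2^r \leq 50 F_0$ and the resulting fact that $q^{\star}$ is bounded away from $\{0,1\}$ become essential: they simultaneously control the Lipschitz constant of $\ln$ near $q^{\star}$, the size of the denominator $|\ln(1-2^{-r})|$, and the tail term $(F_0 \cdot 2^{-r})^s$ driving the Bonferroni truncation, thereby pinning down the constants in $s$, $M$, and $T$ stated in the lemma.
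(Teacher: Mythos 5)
The paper does not prove this lemma; it is imported verbatim from Bar-Yossef et al.\ \cite{BJKST02} with only a citation, so there is no in-paper argument to compare against. Your reconstruction is correct and follows the standard \cite{BJKST02} analysis: the Bonferroni bracketing under $s$-wise independence bounds $|q-q^{\star}|$ by $\binom{F_0}{s}2^{-rs}$, Chebyshev over the $M$ independent repetitions controls $|\hat q_i - p|$, the range hypothesis keeps $q^{\star}$ (which in fact lies in $[1/2, e^{-1/50}]$, slightly wider than your stated $[e^{-1/2},e^{-1/50}]$, a harmless imprecision) away from $\{0,1\}$ so the logarithm is Lipschitz and the division by $\ln(1-2^{-r})=\Theta(1/F_0)$ yields a multiplicative guarantee, and the median over $T=\Theta(\log(1/\delta))$ iterations amplifies the constant success probability.
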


Following the recipe outlined above, we can transform a $F_0$ streaming algorithm to a model counting algorithm by designing a subroutine that can compute the sketch for the set of all solutions described by $\varphi$ and a subroutine to find $r$. The following proposition achieves the first objective for CNF formulas using a small number of calls to an NP oracle:

\begin{proposition}\label{prop:range}
There is an algorithm	{\FindMaxRange}  that given $\varphi$ over $n$ variables and hash function $h \in \mathcal{H}_{s-{\rm wise}}(n,n)$, returns $t$ such that 
	\begin{enumerate}
		\item $\exists z, z \models \varphi$ and $h(z)$ has $t$ least significant bits equal to zero. 
		\item $\forall (z \models \varphi) \implies $ $h(z)$ has $ \leq t$ least significant bits equal to zero. 
	\end{enumerate}
	If $\varphi$ is a CNF formula, then {\FindMaxRange} makes $\mathcal{O}(\log n)$ calls to an NP oracle.
\end{proposition}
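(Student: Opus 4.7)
The plan is to reduce the problem to a monotone predicate and then binary-search over the $n+1$ possible answers. For each $k \in \{0,1,\ldots,n\}$, define the predicate
\[
Q_k(\varphi, h) \;:=\; \bigl[\,\exists z \in \{0,1\}^n \colon z \models \varphi \ \wedge \ \text{the last } k \text{ bits of } h(z) \text{ are } 0\,\bigr].
\]
Since $h \in \mathcal{H}_{s\text{-wise}}(n,n)$ is of the form $h(z) = Az + b$ with $A \in \mathbb{F}_2^{n \times n}$ and $b \in \mathbb{F}_2^{n}$, the constraint ``last $k$ bits of $h(z)$ are $0$'' is a system of $k$ linear equations over $\mathbb{F}_2$, namely $A'z = b'$ where $A'$ (resp.\ $b'$) consists of the last $k$ rows of $A$ (resp.\ entries of $b$). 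Hence $Q_k$ is of the form ``$\exists z$ satisfying $\varphi \wedge \psi_k(z)$,'' where $\psi_k$ is a conjunction of $k$ XOR constraints of length $\le n+1$. By standard Tseitin-style encoding (introducing $O(n)$ fresh auxiliary variables per XOR), the conjunction $\varphi \wedge \psi_k$ can be written as a polynomial-size CNF formula, so $Q_k$ is decidable by a single NP-oracle query.

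First, I would observe the obvious monotonicity: if $Q_k$ holds then $Q_{k'}$ holds for every $k' \le k$ (any witness for $Q_k$ is a witness for $Q_{k'}$ since having $k$ trailing zeros implies having $k'$ trailing zeros). Therefore the set $\{k : Q_k(\varphi,h)\text{ is true}\}$ is a downward-closed subset of $\{0,1,\ldots,n\}$, and the desired quantity
\[
t \;=\; \max_{z \models \varphi} \TrailZero(h(z)) \;=\; \max\{k \colon Q_k(\varphi,h) \text{ holds}\}
\]
is precisely the largest element of that set. Note $Q_0$ is simply satisfiability of $\varphi$; assuming $\varphi$ is satisfiable (otherwise the estimation problem is trivial and can be detected with one query), the set is nonempty.

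Next, I would execute a standard binary search for $t$ over the interval $[0,n]$. In each round, pick the midpoint $k$ of the current interval, submit the NP query $Q_k$, and recurse on $[k,\text{high}]$ or $[\text{low},k-1]$ depending on the answer. After $\lceil \log_2(n+1) \rceil = O(\log n)$ queries the interval collapses to a single value, which we return. Correctness follows from monotonicity, and the oracle-query complexity is exactly the depth of the binary search.

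The only mildly subtle point---and what I would take to be the main thing to verify carefully---is that each query $Q_k$ really can be posed as a single NP query when $\varphi$ is in CNF, since we are adjoining $k \le n$ XOR-constraints to a CNF. This is routine: each XOR constraint $x_{i_1} \oplus \cdots \oplus x_{i_\ell} = c$ admits a linear-size CNF encoding using $O(\ell)$ auxiliary Tseitin variables, so $\varphi \wedge \psi_k$ has size $O(|\varphi| + n^2)$ and its satisfiability is a single NP-oracle call. Combined with the $O(\log n)$ rounds of binary search, this establishes the proposition.
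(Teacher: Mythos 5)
Your overall strategy---define the monotone predicate $Q_k$, observe downward closure, and binary search over $\{0,\dots,n\}$ with one NP query per step---is exactly the paper's proof, which uses the oracle $O=\{\langle \varphi, h, t\rangle \mid \exists x,\, x \models \varphi,\ h(x) = y0^t \text{ for some } y\}$ and a binary search costing $O(\log n)$ calls.

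There is, however, one genuinely wrong step in your justification of why each $Q_k$ is a single NP query. You assume $h \in \mathcal{H}_{s\text{-wise}}(n,n)$ has the form $h(z) = Az + b$ over $\mathbb{F}_2$, so that ``the last $k$ bits of $h(z)$ are $0$'' becomes $k$ XOR constraints amenable to Tseitin encoding. But an affine map over $\mathbb{F}_2$ with random $A$ and $b$ is only \emph{pairwise} independent; the family used here must be $s$-wise independent for $s = O(\log(1/\varepsilon))$, and the standard realization (the one the paper invokes) is a random degree-$s$ polynomial $h:\mathbb{F}_{2^n}\to\mathbb{F}_{2^n}$. The trailing-zero condition on such an $h$ is not a system of $\mathbb{F}_2$-linear equations, so your XOR-to-CNF reduction does not apply. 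The repair is easy and is what the paper does: NP membership of $Q_k$ does not require any linear structure at all, only that $h(z)$ be evaluable in polynomial time given the witness $z$, which holds for a degree-$s$ polynomial over $\mathbb{F}_{2^n}$. With that substitution your argument goes through verbatim; the monotonicity and binary-search portions are correct and identical to the paper's.
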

\begin{proof}
Consider an NP oracle $O= \{\langle \varphi, h, t\rangle \mid \exists x, \exists y, x \models \varphi, h(x) = y0^t\rangle\}$. Note that $h$ can be implemented as a degree-$s$ polynomial $h: \mathbb{F}_{2^n} \to \mathbb{F}_{2^n}$, so that $h(x)$ can be evaluated in polynomial time. A binary search, requiring $O(\log n)$ calls to $O$, suffices to find the largest value of $t$ for which $\langle \varphi, h, t\rangle$ belongs to $O$.
\end{proof}
We note that unlike Propositions \ref{prop:boundedsat} and \ref{prop:findmin}, we do not know whether {\FindMaxRange} can be implemented efficiently when $\varphi$ is a DNF formula.  For a degree-$s$ polynomial $h: \mathbb{F}_{2^n} \to \mathbb{F}_{2^n}$, we can efficiently test whether $h$ has a root by computing $\mathsf{gcd}(h(x), x^{2^n}-x)$, but it is not clear how to simultaneously constrain some variables according to a DNF term.

Equipped with Proposition~\ref{prop:range}, we obtain {\ApproxModelCountEst} that takes in a formula $\varphi$ and a suitable value of $r$  and returns $|\satisfying{\varphi}|$. The key idea of {\ApproxModelCountEst} is to repeatedly invoke {\FindMaxRange} for each of the chosen hash functions and compute the estimate based on the sketch $\mathcal{S}$ and the value of $r$.  The following lemma summarizes the time complexity and guarantees of {\ApproxModelCountEst} for CNF formulas.
 
 \begin{theorem}\label{lm:est}
 	Given a CNF formula $\varphi$, parameters $\varepsilon$ and $ \delta$, and $r$ such that $2F_0 \leq 2^r \leq 50F_0$, the algorithm {\ApproxModelCountEst} returns $c$ satisfying \[\Pr \left[ \frac{|\satisfying{\varphi}}{1+\varepsilon} \leq c \leq (1+\varepsilon)|\satisfying{\varphi}|\right] \geq 1- \delta.\]
 	 {\ApproxModelCountEst} makes $\mathcal{O}(\frac{1}{\varepsilon^2} \log n \log (\frac{1}{\delta}))$ calls to an NP oracle.
 \end{theorem}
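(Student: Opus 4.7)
The plan is to instantiate the general recipe from Section~3.1 for the \Estimation\ strategy, using {\FindMaxRange} (Proposition~\ref{prop:range}) as the subroutine that builds the sketch, and then to invoke the Bar-Yossef et al.\ lemma to convert the sketch into an $(\varepsilon,\delta)$-estimate. Concretely, {\ApproxModelCountEst} would pick a two-dimensional collection $H$ of $T \cdot M$ hash functions, where $T = \Theta(\log(1/\delta))$ and $M = \Theta(1/\varepsilon^2)$, each drawn independently from $\mathcal{H}_{s\text{-wise}}(n,n)$ with $s = \Theta(\log(1/\varepsilon))$. Then for every pair $(i,j)\in[T]\times[M]$ it would set
\[
\mathcal{S}[i,j] \gets \FindMaxRange(\varphi, H[i,j]).
\]
By the guarantees of {\FindMaxRange}, $\mathcal{S}[i,j]$ is exactly $\max_{x\in\satisfying{\varphi}} \TrailZero(H[i,j](x))$, so $\mathcal{P}_3(\mathcal{S}, H, \satisfying{\varphi})$ holds by construction. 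Finally, the algorithm outputs
\[
c \gets \mathrm{Median}_{i}\left\{\frac{\ln\!\left(1 - \tfrac{1}{M}\sum_{j=1}^{M}\mathbbm{1}\{\mathcal{S}[i,j]\geq r\}\right)}{\ln(1 - 2^{-r})}\right\}.
\]

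Correctness then follows immediately: since $\satisfying{\varphi}$ plays the role of $\mathbf{a}_u$ and the hash collection is drawn from the family prescribed by the lemma, the Bar-Yossef et al.\ bound applies and gives $\Pr[(1-\varepsilon)|\satisfying{\varphi}|\leq c \leq (1+\varepsilon)|\satisfying{\varphi}|]\geq 1-\delta$ under the hypothesis $2F_0\leq 2^r\leq 50 F_0$; one then absorbs the $(1-\varepsilon)$ lower bound into the $1/(1+\varepsilon)$ form of the theorem statement by standard rescaling of $\varepsilon$.

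For the NP-oracle count, observe that the only place oracle access is used is inside {\FindMaxRange}, and by Proposition~\ref{prop:range} each such call costs $O(\log n)$ oracle queries. The subroutine is invoked once per $(i,j)$ pair, so the total oracle cost is
\[
T\cdot M\cdot O(\log n)\;=\;O\!\left(\tfrac{1}{\varepsilon^2}\log n \log\tfrac{1}{\delta}\right),
\]
matching the bound in the theorem. The only delicate step is verifying that {\FindMaxRange}'s binary search is sound when $h$ is represented as a degree-$s$ polynomial over $\mathbb{F}_{2^n}$: one must check that the predicate ``some satisfying assignment hashes to a value with $\geq t$ trailing zeros'' is monotone in $t$, which is immediate, and that it is expressible as a single NP query (by guessing $x$ and $y$ and verifying $\varphi(x)$ together with $h(x)=y\cdot 2^t$ in polynomial time), which follows from polynomial-time arithmetic in $\mathbb{F}_{2^n}$. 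The main conceptual obstacle is not the approximation analysis --- which is inherited wholesale from the streaming lemma --- but rather ensuring that the symbolic construction of the sketch faithfully realizes $\mathcal{P}_3$; the CNF case handles this cleanly, which is exactly why the theorem is stated only for CNF and not DNF.
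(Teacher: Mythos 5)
Your proposal matches the paper's own (implicit) argument exactly: construct the sketch by calling \FindMaxRange{} once per hash function so that $\mathcal{P}_3(\mathcal{S},H,\satisfying{\varphi})$ holds, apply the Bar-Yossef et al.\ lemma for correctness, and multiply $T\cdot M = O(\frac{1}{\varepsilon^2}\log\frac{1}{\delta})$ invocations by the $O(\log n)$ oracle cost of each from Proposition~\ref{prop:range}. The proof is correct and takes essentially the same route as the paper.
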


In order to obtain $r$, we run in parallel another counting algorithm based on the simple 
$F_0$-estimation algorithm \cite{FM85, AMS99} which we call 
$\mathsf{FlajoletMartin}$.
Given a stream $\mathbf{a}$, the $\mathsf{FlajoletMartin}$ algorithm chooses a random pairwise-independent hash function $h \in H_{xor}(n,n)$, computes the largest $r$ so that for some  $x \in \mathbf{a}_u$,  the $r$ least significant bits of $h(x)$ are zero, and outputs $r$. Alon, Matias and Szegedy \cite{AMS99} showed that $2^r$ is a 5-factor approximation of $F_0$ with probability $3/5$. Using our recipe, we can convert $\mathsf{FlajoletMartin}$ into an algorithm that approximates the number of solutions to a CNF formula $\varphi$ within a factor of 5 with probability 3/5. It is easy to check that using the same idea as in Proposition \ref{prop:range}, this algorithm requires $O(\log n)$ calls to an NP oracle.

\begin{algorithm}
	\caption{$\ApproxModelCountEst(\varphi,\varepsilon, \delta,r)$}
	\label{alg:approxest}
	\begin{algorithmic}[1]
		\State $\thresh \gets 96/\varepsilon^2$
		\State $t \gets 35 \log(1/\delta)$
		\State $H \gets \PickHashFunctions(\mathcal{H}_{s-{\rm wise}}(n,n), t \times \thresh)$
		\State $S \gets \{\}$
		\For {$i \in  [1, t] $}
		\For {$j \in [1, \thresh]$}
		\State $S[i,j] \gets \FindMaxRange(\varphi, \TrailZero(H[i,j]))$
		\EndFor
		\EndFor	
		\State $Est \gets \mathrm{Median}\left\{\frac{\ln\left(1-\frac{1}{\thresh}\sum_{j=1}^\thresh \mathbb{1}\{\mathcal{S}[i,j]\geq r\}\right)}{\ln(1-2^{-r})}\right\}_i$
		\State \Return $Est$
	\end{algorithmic}
\end{algorithm}

\subsection{The Opportunities Ahead}
As noted in Section~\ref{sec:bucketing}, the algorithms based on Bucketing was already known
and have witnessed a detailed technical development from both applied and algorithmic perspectives. The 
model counting algorithms based on  Minimum and Estimation are new. We discuss some 
potential implications of these  new algorithms to SAT solvers and other aspects. 

\paragraph{MaxSAT solvers with native support for XOR constraints} 
When the input formula $\varphi$ is represented as CNF, then  {\ApproxMC}, the model counting
	algorithm based on Bucketing strategy, invokes NP oracle over CNF-XOR formulas, i.e., formulas 
	expressed as  conjunction of CNF and XOR constraints. The significant improvement in runtime
	performance of {\ApproxMC} owes to the design of SAT solvers with native support for CNF-XOR formulas~\cite{SNC09,SM19,SGM20}.
	Such solvers have now found applications in other domains such as cryptanalysis. 
	It is perhaps worth emphasizing that the proposal of {\ApproxMC} was crucial to renewed interest in 
	the design of SAT solvers with native support for CNF-XOR formulas.   
	As observed in Section~\ref{sec:minimum}, the algorithm based on Minimum strategy would ideally 
	invoke a MaxSAT solver that can handle XOR constraints natively. We believe that Minimum-based algorithm
	will ignite interest in the design of MaxSAT solver with native support for XOR constraints. 
	
\paragraph{FPRAS for DNF based on Estimation}
 In Section~\ref{sec:estim}, we were unable to  show that the
 model counting algorithm obtained based on Estimation is FPRAS when $\varphi$ is represented as DNF. 
  The algorithms based on Estimation have been shown to achieve optimal space efficiency in the context of 
 $F_0$ estimation. In this context, an open problem is to investigate whether Estimation-based strategy lends itself
 to FPRAS for DNF counting. 
 
 \paragraph{Empirical Study of FPRAS for DNF Based on Minimum}
 Meel et al.~\cite{MSV18,MSV19} observed that FPRAS for DNF based on Bucketing has superior performance, in terms of the number of instances solved, to that of FPRAS schemes
 based on Monte Carlo framework. In this context, a natural direction of future work would be to conduct an empirical study to  
 understand behavior of FPRAS scheme based on Minimum strategy.


\section{Distributed DNF Counting} \label{sec:distributed}

Consider the problem of {\em distributed DNF counting}. In this setting, there are $k$ sites that can each communicate with a central coordinator. The input DNF formula $\varphi$ is partitioned into $k$ DNF subformulas $\varphi_1, \dots, \varphi_k$, where each $\varphi_i$ is a subset of the terms of the original $\varphi$, with the $j$'th site receiving only $\varphi_j$. The goal is for the coordinator to obtain an $(\epsilon,\delta)$-approximation of the number of solutions to $\varphi$, while minimizing the total number of bits communicated between the sites and the coordinator. Distributed algorithms for sampling and counting solutions to CSP's have been studied recently in other models of distributed computation \cite{FSY18, FHY18, FY18, FG18}. From a practical perspective, given the centrality of \#DNF in the context of probabilistic databases~\cite{S18,RS08}, a distributed DNF counting would entail applications in  distributed probabilistic databases.  

From our perspective, distributed DNF counting falls within the {\em distributed functional monitoring} framework formalized by Cormode et al. \cite{CMY11}. Here, the input is a stream $\mathbf{a}$ which is partitioned arbitrarily into sub-streams $\mathbf{a}_1, \dots, \mathbf{a}_k$ that arrive at each of $k$ sites. Each site can communicate with the central coordinator, and the goal is for the coordinator to compute a function of the joint stream $\mathbb{a}$ while minimizing the total communication. This general framework has several direct applications and has been studied extensively \cite{BO03, CGMR05, MSDO05, KCR06, ABC09, SSK10, SSK11, CMYZ12, WZ12, YZ13, HYZ12, WZ17}. In distributed DNF counting, each sub-stream $\mathbf{a}_i$ corresponds to the set of satisfying assignments to each subformula $\varphi_i$, while the function to be computed is $F_0$. 

The model counting algorithms discussed in Section \ref{sec:counting} can be extended to the distributed setting. We briefly indicate the distributed implementations for each of the three algorithms. As above, we set the parameters $\thresh$ to $O(1/\varepsilon^2)$.  Correctness follows from Bar-Yossef {\it et al.} \cite{BJKST02} and the earlier discussion. 

We consider an adaptation of {\BoundedSAT} that takes in $\varphi$ over $n$ variables,  a hash function $h \in \Hteop (n, m)$, and a threshold $t$ as inputs, returns a set $U$ of solutions such that $|U| = \min (t, |\satisfying{\varphi \wedge h(x) = {0^m}}|)$. 

\begin{description}
\item[Bucketing.] Setting $m=$ $O(\log(k/\delta\varepsilon^2))$, the coordinator chooses $H[1], \dots, H[T]$ from $\Hteop(n,n)$ and $G$ from $\Hxor(n,m)$. It then sends them to the $k$ sites.  Let $m_{i,j}$ be the  smallest  $m$ such that the size of the set $\BoundedSAT(\varphi_j, H[i]_m, \mathsf{thresh})$ is smaller than $\mathsf{thresh}$.   The $j$'th site sends the coordinator the following tuples: $\langle G(x), \mathsf{TrailZero}(H[i](x))\rangle$ for each $i \in [t]$ and for each $x$ in $\BoundedSAT(\varphi_j, H[i]_{m_{i,j}},$ $\mathsf{thresh})$. Note that each site only sends tuples for at most $O(1/\delta \varepsilon^2)$ choices of $x$, so that $G$ hashes these $x$ to distinct values with probability $1-\delta/2$. It is easy to verify that the coordinator can then execute the rest of the algorithm $\mathsf{ApproxMC}$.  The communication cost is $\tilde{O}(k(n+1/\varepsilon^2) \cdot \log(1/\delta))$, and the time complexity for each site is polynomial in $n$, $\varepsilon^{-1}$, and $\log(\delta^{-1})$.

\item[Minimum.] The coordinator chooses hash functions $H[1],\dots,H[t]$ from $\Hteop(n,3n)$ and sends it to the $k$ sites. Each site runs the $\mathsf{FindMin}$ algorithm for each hash function and sends the outputs to the coordinator. So, the coordinator receives sets $S[i,j]$, consisting of the $\thresh$ lexicographically smallest hash values of the solutions to $\varphi_j$. The coordinator then extracts $S[i]$, the $\thresh$ lexicographically smallest elements of $S[i,1] \cup \dots \cup S[i,k]$ and proceeds with the rest of algorithm $\mathsf{ApproxModelCountMin}$. The communication cost is $O(kn/\varepsilon^2 \cdot \log(1/\delta))$ to account for the $k$ sites sending the outputs of their $\mathsf{FindMin}$ invocations.  The time complexity for each site is polynomial in $n$, $\varepsilon^{-1}$, and $\log(\delta^{-1})$. 

\item[Estimation.] For each $i \in [t]$, the coordinator chooses $\thresh$ hash functions $H[i,1], \dots, H[i,\thresh]$, drawn pairwise independently from $\mathcal{H}_{s-{\rm wise}}(n,n)$ (for $s = O(\log(1/\varepsilon))$) and sends it to the $k$ sites. Each site runs the $\mathsf{FindMaxRange}$ algorithm for each hash function and sends the output to the coordinator. Suppose the coordinator receives $S[i,j, \ell] \in [n]$ for each $i \in [t], j \in [\thresh]$ and $\ell \in [k]$. It computes $S[i,j] = \max_\ell S[i,j,\ell]$. The rest of $\mathsf{ApproxModelCountEst}$ is then executed by the coordinator. The communication cost is $\tilde{O}(k(n+1/\varepsilon^2)\log(1/\delta))$. However, as earlier, we do not know a polynomial time algorithm to implement the $\mathsf{FindMaxRange}$ algorithm for DNF terms.
\end{description}

\subsection*{Lower Bound}
The communication cost for the Bucketing- and Estimation-based algorithms is nearly optimal in their dependence on $k$ and $\varepsilon$. Woodruff and Zhang \cite{WZ12} showed that the randomized communication complexity of estimating $F_0$ up to a $1+\varepsilon$ factor in the distributed functional monitoring setting is $\Omega(k/\varepsilon^2)$. We can reduce $F_0$ estimation problem to distributed DNF counting. Namely, if for the $F_0$ estimation problem, the $j$'th site receives items $a_1, \dots, a_m \in [N]$, then  for the distributed DNF counting problem,  $\varphi_j$ is a DNF formula on $\lceil \log_2 N \rceil$ variables whose solutions are exactly $a_1, \dots, a_m$ in their binary encoding. Thus, we immediately get an $\Omega(k/\varepsilon^2)$ lower bound for the distributed DNF counting problem. Finding the optimal dependence on $N$ for $k>1$ remains an interesting open question\footnote{Note that if $k=1$, then $\log(n/\varepsilon)$ bits suffices, as the site can solve the problem on its own and send to the coordinator the binary encoding of a $(1+\varepsilon)$-approximation of $F_0$.}.
\section{From Counting to Streaming: Structured Set Streaming}\label{sec:streaming}

\ignore{
We consider a universe $U= \{0,1\}^n$. For any $x\in U$, we interpret $x$ as both as an $n$-bit string 
as well as encoding an integer between $0$ and $2^n-1$. We denote $|U| = N = 2^n$. }

In this section we consider  {\em structured set streaming model} where each item $S_i$ of the stream is a succinct  representation of a set over the universe $U = \{0,1\}^n$.  Our goal is to design efficient algorithms (both in terms of memory and processing time per item)  for computing $|\cup_i S_i|$ - number of distinct elements in the union of all the sets in the stream. We call this problem $F_0$ computation over structured set streams.

\subsection*{DNF Sets}
A particular representation we are interested in is where each set is presented as the set of satisfying assignments to a  DNF formula. Let $\varphi$ is a DNF formula over $n$ variables. Then the {\em DNF Set} corresponding to $\varphi$ is the set of satisfying assignments of $\varphi$. The {\em size} of this representation is  the number of terms in the formula $\varphi$. 

A stream over DNF sets is a stream of DNF formulas $\varphi_1, \varphi_2, \ldots$.
Given such a DNF stream, the goal is to estimate  $|\bigcup_{i} S_i|$ where  $S_i$ the DNF set represented by $\varphi_i$. This quantity is same as the number of satisfying assignments of the formula $\vee_i \varphi_i$. We show that the algorithms described in the previous section carry over to obtain $(\epsilon, \delta)$ estimation algorithms for this problem with space and per-item time $\mathrm{poly}(1/\epsilon,  n, k, \log (1/\delta))$ where $k$ is the size of the formula. 

Notice that this model generalizes the traditional streaming model where each item of the stream is an element $x\in U$ as it can be represented as single term DNF formula $\phi_x$ whose only satisfying assignment is $x$. This model also subsumes certain other models considered in the streaming literature that we discuss later.

\begin{theorem}\label{DNFStream:thm}
There is a streaming algorithm to compute an $(\epsilon, \delta)$ approximation of $F_0$  over DNF sets. This algorithm takes space $O({n\over \varepsilon^2}\cdot\log {1\over \delta})$ and processing time $O(n^4\cdot k\cdot{1\over \varepsilon^2}\cdot \log{1\over \delta})$ per item where $k$ is the size (number of terms) of the corresponding DNF formula. 

\end{theorem}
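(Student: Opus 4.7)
The plan is to adapt the Bucketing-based algorithm \ApproxMC\ from Section~\ref{sec:bucketing} so that its sketch is maintained incrementally as the DNF stream $\varphi_1, \varphi_2, \ldots$ arrives. The target invariant parameterizing the sketch is the same relation $\mathcal{P}_1$ used in Lemma~\ref{lem:bucket}, but now taken over $\mathbf{a}_u = \bigcup_j \satisfying{\varphi_j}$.

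Concretely, I would initialize $t = 35\log(1/\delta)$ hash functions $H[1],\ldots,H[t]$ sampled from $\Hteop(n,n)$, set $\thresh = 96/\varepsilon^2$, and for each $i \in [t]$ maintain a bucket $\ell_i \subseteq \{0,1\}^n$ and sampling level $m_i$, starting at $\ell_i = \emptyset$ and $m_i = 0$. When a new formula $\varphi_j$ arrives, for each $i$ I execute the following inner loop: invoke the solution-enumerating variant of \BoundedSAT\ (introduced in Section~\ref{sec:distributed}) with inputs $(\varphi_j, H[i]_{m_i}, \thresh)$ to retrieve up to $\thresh$ solutions of $\varphi_j$ hashing to $0^{m_i}$ under $H[i]_{m_i}$; take their union with $\ell_i$; if $|\ell_i| > \thresh$, increment $m_i$ and prune $\ell_i$ to retain only $x$ with $H[i]_{m_i}(x) = 0^{m_i}$. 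Repeat until $|\ell_i| \le \thresh$. After the stream ends, output $\mathrm{Median}_i \bigl(|\ell_i|\cdot 2^{m_i}\bigr)$.

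Correctness reduces to showing that after processing $\varphi_j$ the invariant $\ell_i = \{x \in \bigcup_{j'\le j}\satisfying{\varphi_{j'}} : H[i]_{m_i}(x) = 0^{m_i}\}$ with $|\ell_i|\le\thresh$ holds. The subtle point, and the main obstacle to a clean inductive argument, is that a single call to \BoundedSAT\ only returns up to $\thresh$ solutions of $\varphi_j$ at the current level $m_i$; if the bucket overflows and we increment $m_i$, there may still be solutions of $\varphi_j$ relevant at the new level that were never exposed. The inner loop fixes this by re-invoking \BoundedSAT\ at each new $m_i$, and it must terminate since $m_i$ can be incremented at most $n$ times. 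Given this invariant, the output guarantee follows directly from Lemma~\ref{lem:bucket} applied to $\mathbf{a}_u = \bigcup_j\satisfying{\varphi_j}$.

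For the resource bounds, the space per hash function is $O(n/\varepsilon^2)$ for the bucket plus $\Theta(n)$ for the Toeplitz matrix and offset, giving a total of $O((n/\varepsilon^2)\log(1/\delta))$, matching the claim. For per-item time, each of the $t$ hash functions triggers at most $n+1$ iterations of the inner loop (one per increment of $m_i$ plus one final check), and by Proposition~\ref{prop:boundedsat} each call to \BoundedSAT\ on a $k$-term DNF costs $O(n^3 \cdot k \cdot \thresh)$ time; the $O(n\cdot\thresh)$ cost of pruning per increment is dominated. Summing gives a per-item bound of $O(t\cdot n\cdot n^3\cdot k\cdot (1/\varepsilon^2)) = O(n^4 \cdot k \cdot (1/\varepsilon^2) \cdot \log(1/\delta))$, as required.
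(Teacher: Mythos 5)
Your route is genuinely different from the paper's. The paper adapts the \emph{Minimum-value} algorithm: for each hash function it keeps the $\thresh$ lexicographically smallest elements of $h(\satisfying{\varphi_1\vee\cdots\vee\varphi_{j}})$, a sketch that merges trivially and exactly (the $\thresh$ smallest elements of a union lie in the union of the $\thresh$ smallest of each part), with correctness from Lemma~\ref{lem:bjkstmin} and the per-item cost coming from one {\FindMin} call (Proposition~\ref{prop:findmin}); the Bucketing adaptation you chose is only mentioned there in passing. The Bucketing sketch is \emph{not} mergeable in the same clean way, and that is exactly where your argument has a gap.

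The gap is in the termination test of your inner loop: it does not certify the invariant $\ell_i=\{x\in\bigcup_{j'\le j}\satisfying{\varphi_{j'}}\,:\,H[i]_{m_i}(x)=0^{m_i}\}$. The call $\BoundedSAT(\varphi_j,H[i]_{m_i},\thresh)$ returns only $\min(\thresh,\cdot)$ solutions, chosen arbitrarily. If $\varphi_j$ has strictly more than $\thresh$ solutions at level $m_i$ but the $\thresh$ returned ones happen to contain $\ell_i$ (so that the union has size exactly $\thresh$), your loop halts without incrementing $m_i$. Then $\ell_i$ is a strict subset of the true cell, and the level is left at a value where the true cell has more than $\thresh$ elements, so relation $\mathcal{P}_1$ fails and Lemma~\ref{lem:bucket} no longer applies to the final median. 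The fix is small: invoke {\BoundedSAT} with cutoff $\thresh+1$ and continue the loop (increment $m_i$ and prune) whenever that call returns $\thresh+1$ solutions \emph{or} the union exceeds the threshold; only when the call returns \emph{all} solutions of $\varphi_j$ at the current level does the union provably equal the true cell. With that correction your induction goes through, and your space bound and worst-case per-item time bound (at most $n+1$ loop iterations per hash function, each dominated by one {\BoundedSAT} call of cost $O(n^3 k\,\thresh)$) are as claimed.
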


\begin{proof}
We show how to adapt  Minimum-value based algorithm from Section~\ref{sec:minimum} to this setting. The algorithm picks a hash function $h \in \Hteop(n,3n)$ maintains the set $\mathcal{B}$ consisting of  $t$ lexicographically minimum elements of the set $\{h(\satisfying{\varphi_1, \vee \ldots \vee, \varphi_{i-1}})\}$ after processing $i-1$ items. When $\varphi_i$ arrives, it computes the set $\mathcal{B'}$
consisting of the  $t$ lexicographically minimum values of the set $\{h(\satisfying{\varphi_i})\}$ and subsequently updates $\mathcal{B}$ by computing the $t$ lexicographically smallest elements from 
$\mathcal{B}\cup \mathcal{B'}$. By Proposition~\ref{prop:findmin}, computation of $\mathcal{B'}$ can be done in time $O(n^4\cdot k \cdot t)$ where $k$ is the number of terms in $\varphi_i$. Updating $\mathcal{B}$ can be done in $O(t\cdot n)$ time. Thus update time for the item  $\varphi_i$ is $O(n^4 \cdot k \cdot t)$. For obtaining an $(\varepsilon, \delta)$ approximations we set 
$t = O({1\over \varepsilon^2})$ and repeat the procedure $O(\log {1\over \delta})$ times and take the median value. Thus the update time for item $\varphi$ is $O(n^4\cdot k\cdot{1\over \varepsilon^2}\cdot \log{1\over \delta})$. For analyzing sapce, each hash function uses $O(n)$ bits and to store $O({1 \over \epsilon^2})$ minimums, we require $O({n \over \epsilon^2})$ space resulting in overall space usage of $O({n\over \varepsilon^2}\cdot\log {1\over \delta})$. The proof of correctness follows from Lemma~\ref{lem:bjkstmin}. 
\end{proof}

Instead of using  Minimum-value based algorithm, we could adapt Bucketing-based algorithm to obtain an algorithm with similar space and time complexities. 
As noted earlier, some of  the set streaming models considered in the literature can be reduced the DNF set streaming. We discuss them next.

\subsection*{Multidimensional Ranges}\label{sec:multidimrange}

A $d$ dimensional range over an universe $U$ is defined as  $[a_1,b_1] \times [a_2,b_2] \times \ldots \times [a_d, b_d]$. Such a range  represents the set of tuples $\{(x_1,\ldots,x_d)$ where $a_i \leq x_i \leq b_i$  and $x_i$ is an integer. Note that  every $d$-dimensional range can be succinctly by the tuple $\langle a_1, b_1, \cdots a_d, b_d\rangle$. A multi-dimensional stream is a stream where each item is a $d$-dimensional range. The goal is to compute $F_0$ of the union of  the $d$-dimensional ranges effeiciently. We will show that $F_0$ computation over multi-dimensional ranges can reduced to $F_0$ computation over DNF sets. Using this reduction we arrive at a simple algorithm to compute $F_0$ over multi-dimensional ranges.


\begin{lemma}\label{rangeToDNF:lem}
Any $d$-dimensional range $R$ over $U$ can be represented as a DNF formula $\varphi_R$  over $nd$ variables whose size is at most $(2n)^d$. There is algorithm that takes $R$ as input and outputs the $i^{th}$ term of $\varphi_R$ using $O(nd)$  space, for $1 \leq i \leq (2n)^d$.  
\end{lemma}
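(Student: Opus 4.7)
My plan is to reduce the $d$-dimensional case to the one-dimensional case via Cartesian product, so the core of the argument is to show that a one-dimensional range $[a,b] \subseteq \{0,1\}^n$ (with $a,b$ viewed as $n$-bit integers) can be represented by a DNF formula on $n$ Boolean variables with at most $2n$ terms. I will use the canonical dyadic decomposition of intervals: every interval $[a,b]$ in $[0,2^n-1]$ is the disjoint union of at most $2n$ \emph{dyadic} intervals, i.e.\ intervals of the form $[k \cdot 2^j, (k+1)2^j - 1]$. Each such dyadic interval is precisely the set of $n$-bit strings with a fixed prefix (the binary encoding of $k$, of length $n-j$) and arbitrary suffix, and therefore corresponds to a single DNF term (a conjunction of $n-j$ literals, one per fixed bit).

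For the explicit count, I would let $p$ be the longest common prefix of $a$ and $b$; write $a = p \, 0 \, u$ and $b = p \, 1 \, v$ and split $[a,b]$ into the ``left half'' $[a, p0 1^{n-|p|-1}]$ and the ``right half'' $[p1 0^{n-|p|-1}, b]$. In the left half, I walk through the bits of $u$ from high to low: whenever $u$ has a $0$-bit at position $i$, the set of strings with prefix ``$p0$(bits of $u$ above position $i$)$1$'' and arbitrary lower bits is a dyadic interval lying inside the left half, and together with the single string $a$ itself these exhaust the left half. This gives at most $n-|p|$ terms; a symmetric argument on $v$ handles the right half with at most $n-|p|$ terms, for a total of at most $2n$ terms covering $[a,b]$.

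For the $d$-dimensional range $R = [a_1,b_1] \times \cdots \times [a_d,b_d]$ over the universe $(\{0,1\}^n)^d$, I take $\varphi_R$ to be the DNF on $nd$ variables (partitioned into $d$ blocks of $n$) whose terms are all conjunctions $T_{j_1} \wedge \cdots \wedge T_{j_d}$, where $T_{j_\ell}$ is the $j_\ell$-th term of the 1-dimensional DNF for $[a_\ell, b_\ell]$ constructed above. Each coordinate block is $n$-partitioned independently, so the Cartesian product of dyadic decompositions covers $R$ exactly, and the total number of terms is at most $(2n)^d$. Each term fixes at most $nd$ bits and is still a single DNF term.

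For the enumeration algorithm, given an index $i \in [1,(2n)^d]$, I would decode it in mixed radix base $2n$ as $(i_1,\ldots,i_d)$ with $i_\ell \in [1,2n]$. For each coordinate $\ell$, I run the 1D procedure of the previous paragraph on $(a_\ell,b_\ell)$ until it has produced its $i_\ell$-th dyadic interval, output the corresponding conjunction of literals on the $\ell$-th block of $n$ variables, and discard it before moving to coordinate $\ell+1$. Each 1D generation step stores only the currently examined bit position, the endpoints $a_\ell,b_\ell$, and the common prefix, which is $O(n)$ bits; across all $d$ coordinates this is $O(nd)$ space. I do not need to materialize the full formula or even all terms of a single coordinate simultaneously, which is what makes the space bound tight. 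The main (and essentially only) delicate point will be to make the 1D enumeration deterministic and consistent with the counting bound $2n$, so that indexing from $1$ to $2n$ is well-defined for every interval $[a_\ell,b_\ell]$ (including degenerate cases where $a_\ell = b_\ell$ or where the left/right half is empty, which I can handle by padding the list of terms with trivially unsatisfiable or repeated entries to always have length exactly $2n$).
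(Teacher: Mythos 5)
Your proposal is correct and follows essentially the same route as the paper: split $[a,b]$ at the longest common prefix of $a$ and $b$, cover each half by at most $n$ prefix-determined (dyadic) terms — which is exactly the paper's $\psi \wedge (\overline{x_{\ell+1}}\varphi_{\geq a'} \vee x_{\ell+1}\varphi_{\leq b'})$ construction — take the Cartesian product of the $d$ one-dimensional DNFs, and enumerate the $i$-th term coordinate-by-coordinate in $O(nd)$ space. Your treatment of the enumeration indexing and degenerate intervals is somewhat more explicit than the paper's, but the underlying argument is the same.
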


\begin{proof}
Let $R = [a_1,b_1] \times [a_2,b_2] \times \ldots \times [a_d, b_d]$ be a $d$-dimensional range over $U^{d}$.
We will first  describe the  formula to represent the multi-dimensional range as a conjunction of $d$ DNF formulae $\phi_1, \cdots, \phi_d$ each with at most $2n$ terms, where $\phi_i$ represents $[a_i,b_i]$, the range in the $i^{th}$ dimension.  Converting this into a DNF formula will result in the formula  $\phi_R$ with $(2n)^d$ terms.

For any $\ell$ bit number $c$, $1 \leq c \leq 2^n$,  it is straightforward to write a DNF formula $\varphi_{\leq c}$, of size at most $\ell$,  that represents the range $[0,c]$ (or equivalently the set $\{x\mid 0\leq x \leq c\}$). Similarly we can write  a DNF formula $\varphi_{\geq c}$, of size at most $\ell$ for the range $[c,2^{\ell-1}]$. Now we construct a formula to represent the range $[a,b]$ over $U$ as follows. Let $a_1a_{2}\cdots a_n$ and $b_1b_{2}\cdots b_n$ be the binary representations of $a$ and $b$ respectively.  Let $\ell$ be the largest integer such that $a_1a_2\cdots a_l = b_1b_2\cdots b_l$. Hence $a_{\ell+1} = 0$ and $b_{\ell+1} = 1$.
Let $a'$ and $b'$ denote the integers represented by $a_{l+2}\cdots a_n$ and $b_{l+2} \cdots b_n$. Also, let $\psi$ denote the formula (a single term) that represents the string $a_1\cdots a_\ell$. Then the formula representing $[a,b]$ is $\psi \wedge (\overline{x_{\ell+1}} \varphi_{\geq a'} \vee x_{\ell+1}\varphi_{\leq b'})$. This can be written as a DNF formula by distributing $\psi$ and the number of terms in the resulting formula is at most $2n$, and has  $n$ variables. 
Note that each $\varphi_i$ can be constructed using $O(n)$ space.  To obtain the final DNF representing the range $R$, we need to convert $\varphi_1 \wedge \cdots \varphi_d$ into a DNF formula. It is easy to see that for any $i$, then $i$th term of this DNF can be computed using space $O(nd)$.  Note that this formula has $nd$ variables, $n$ variables per each dimension.


\end{proof}

Using the above reduction  and Theorem~\ref{DNFStream:thm}, we obtain an  an algorithm for estimating $F_0$ over multidimensional ranges in a range-efficient manner.

\begin{theorem}\label{RangeEfficient:cor}
There is a streaming algorithm to compute an $(\epsilon, \delta)$ approximation of $F_0$  over $d$-dimensional ranges that takes space $O( \frac{nd}{\varepsilon^2}\cdot \log (1/\delta))$ and processing time $O((nd)^4\cdot n^d \cdot \frac{1}{\varepsilon^2})\log(1/\delta))$ per item. 

\end{theorem}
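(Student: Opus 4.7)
The plan is to derive the algorithm as a direct composition of Lemma~\ref{rangeToDNF:lem} with Theorem~\ref{DNFStream:thm}. Given an incoming stream of $d$-dimensional ranges $R_1, R_2, \ldots$ over $U = \{0,1\}^n$, I would, conceptually, translate each range $R_i$ into the DNF formula $\varphi_{R_i}$ guaranteed by Lemma~\ref{rangeToDNF:lem}: a formula over $n' := nd$ Boolean variables with at most $k := (2n)^d$ terms, whose satisfying assignments are exactly the binary encodings of the integer tuples in $R_i$. The resulting stream of DNF formulas can then be fed to the structured-set-streaming algorithm of Theorem~\ref{DNFStream:thm}, and since the natural $nd$-bit binary encoding of tuples is a bijection, $F_0$ of the union of the ranges equals $F_0$ of the union of the DNF sets. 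Correctness of the $(\varepsilon,\delta)$-guarantee then follows immediately from Theorem~\ref{DNFStream:thm}.

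The only real subtlety, and the step worth articulating carefully, is that we cannot afford to materialize $\varphi_{R_i}$ explicitly: it may have $(2n)^d$ terms, which is exponential in $d$ and far too large to store in the allowed $O(nd/\varepsilon^2 \cdot \log(1/\delta))$ space. The saving clause is the second half of Lemma~\ref{rangeToDNF:lem}: the $i$-th term of $\varphi_R$ can be produced on demand using only $O(nd)$ working space. The Minimum-value based algorithm underlying Theorem~\ref{DNFStream:thm} (via $\mathsf{FindMin}$) processes the terms of a DNF formula sequentially, merging the $\thresh$ lexicographically smallest hash values obtained from each term into a running set of candidates. Hence I would plug in an on-the-fly term generator in place of an explicit list, so that each term is produced, hashed via $\mathsf{FindMin}$ (whose per-term cost is $O((nd)^3 \cdot nd \cdot \thresh)$ by Proposition~\ref{prop:findmin} applied with $n\to nd$), and then discarded.

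Plugging the parameters $n \to nd$ and $k \le (2n)^d$ into the bounds from Theorem~\ref{DNFStream:thm} then mechanically gives the claimed complexities: the sketch stores $O(1/\varepsilon^2)$ hash values of length $O(nd)$ across $O(\log(1/\delta))$ parallel repetitions, for space $O(nd/\varepsilon^2 \cdot \log(1/\delta))$; and the per-item processing cost is $O((nd)^4 \cdot k \cdot \varepsilon^{-2} \cdot \log(1/\delta)) = O((nd)^4 \cdot n^d \cdot \varepsilon^{-2} \cdot \log(1/\delta))$. The main obstacle, as noted, is not an analytical one but a bookkeeping one: one must make sure that the reduction is carried out in a streaming-friendly, term-by-term manner so that the exponential size $(2n)^d$ of the DNF appears only in the time bound and never in the space bound. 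Everything else is direct invocation of previously established lemmas.
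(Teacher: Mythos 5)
Your proposal is correct and follows essentially the same route as the paper, which proves this theorem by directly composing Lemma~\ref{rangeToDNF:lem} with Theorem~\ref{DNFStream:thm} and substituting $n \to nd$, $k \le (2n)^d$. Your additional care about generating the $(2n)^d$ terms on the fly in $O(nd)$ space (rather than materializing the DNF) is exactly the point of the second half of Lemma~\ref{rangeToDNF:lem} and is, if anything, spelled out more explicitly than in the paper's own one-line proof.
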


\begin{remark}\label{remark}
Tirthapura and Woodruff~\cite{TW12} studied the problem of range efficient estimation of $F_k$ ($k^{th}$ frequency moments) over $d$-dimensional ranges. They claimed an algorithm to estimate $F_0$ with space  and per-item time complexity $\mathrm{poly}(n, d, 1/\epsilon,\log 1/\delta)$.  However they have retracted their claim~\cite{Woodruff20}. Their method only yields $\mathrm{poly}(n^d,1/\epsilon,\log 1/\delta)$ time per item.
 Their proof appears to be involved that require a range efficient implementations of {\em count sketch} algorithm~\cite{CCF04}  and recursive sketches~\cite{IW05,BO10}. We obtain the same complexity bounds with much simpler analysis and a practically efficient algorithm that can use off the shelf available implementations~\cite{MSV18}. 
 \end{remark}

\begin{remark}
	Subsequent to the present work, 
	an improved algorithm for $F_0$ over  structured sets is presented in ~\cite{CMV21} (to appear in PODS 2021). In particular, the paper presents an $F_0$ estimation algorithm, called {\apsestimator},  for streams over {\em Delphic sets}.  
	A set $S \subseteq \{0,1\}^n$ belongs to Delphic family if the following queries can be done in $O(n)$ time: (1) know the size of the set $S$, (2) draw a uniform random sample from $S$, and
	(3) given any $x$ check if $x\in S$. The authors design a streaming algorithm  that given a 
	stream $\mathcal{S}= \langle S_1, S_2 \cdots, S_M \rangle$  wherein each $S_i \subseteq \{0,1\}^n$ belongs to Delphic family,  computes an $(\varepsilon,\delta)$-approximation of $| \bigcup_{i=1}^{M} S_i|$ with  worst case space complexity  
	$O(n\cdot\log (M/\delta)\cdot \varepsilon^{-2})$ and per-item time is $\widetilde{O}(n \cdot \log (M/\delta)\cdot \varepsilon^{-2})$. The algorithm \apsestimator, when applied to $d$-dimensional ranges, gives per-item time and space complexity bounds that are $\mathrm{poly}(n, d, \log M, 1/\varepsilon, \log 1/\delta)$.  While {\apsestimator} brings down the dependency on $d$ from exponential to polynomial, it works under the assumption that the length of the stream $M$ is known. The general setup presented in~\cite{CMV21}, however, can be applied to other structured sets considered in this paper including multidimensional arithmetic progressions. 	
\end{remark}

\paragraph{Representing Multidimensional Ranges as CNF Formulas.} 
Since the algorithm, {\apsestimator}, presented in~\cite{CMV21}, employs a sampling-based technique, a natural question is whether there exists a hashing-based technique that achieves per-item time polynomial in $n$ and $d$. We note that the above approach of representing a multi-dimensional range as DNF formula does not yield such an algorithm. This is because there exist $d$-dimensional ranges whose DNF representation requires $\Omega(n^d)$ size.


\begin{observation}
There exist $d$-dimensional ranges whose DNF representation has size $\geq n^d$.
\end{observation}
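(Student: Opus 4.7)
The plan is to exhibit an explicit $d$-dimensional range together with a fooling set of size $n^d$, and then invoke the standard fooling-set lower bound for DNF size: if $F \subseteq R$ has the property that, for every pair of distinct $p, q \in F$, the smallest sub-cube of $\{0,1\}^{nd}$ containing both $p$ and $q$ is not contained in $R$, then any DNF whose set of satisfying assignments equals $R$ must have at least $|F|$ terms. (Each term of such a DNF is a sub-cube $T \subseteq R$; if $p, q \in T$, then the sub-cube they generate is also $\subseteq T \subseteq R$, contradicting the fooling property.)

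The range I propose is $R = [1, 2^n - 1]^d$. First I would handle the one-dimensional case. Let $F_1 = \{e_1, \ldots, e_n\} \subseteq \{0,1\}^n$, where $e_j$ is the string with a single $1$ in position $j$ (as an integer, $e_j = 2^{n-j} \in [1, 2^n - 1]$). For any $j \neq k$ the strings $e_j$ and $e_k$ agree on all coordinates outside $\{j, k\}$ (where they are $0$) and differ in positions $j$ and $k$; the smallest sub-cube containing both therefore frees exactly positions $j, k$ and fixes all other coordinates to $0$, so it contains $0^n \notin [1, 2^n - 1]$. Hence $F_1$ is a fooling set of size $n$ for $[1, 2^n - 1]$.

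Next I would lift this to $d$ dimensions by taking $F = F_1^d$, so $|F| = n^d$ and $F \subseteq R$. The key structural fact is that the variables split as $n$ variables per dimension, and so every sub-cube of $\{0,1\}^{nd}$ decomposes as a product $T = T_1 \times \cdots \times T_d$ with each $T_i$ a sub-cube of $\{0,1\}^n$, with $T \subseteq R$ iff $T_i \subseteq [1, 2^n - 1]$ for all $i$. Given two distinct points $p, q \in F$, pick a dimension $k$ in which they differ; then $p_k, q_k$ are two distinct elements of $F_1$, so by the 1D argument the sub-cube generated by $p_k, q_k$ is not contained in $[1, 2^n - 1]$. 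Consequently any sub-cube of $R$ containing both $p$ and $q$ would have to contain, in its $k$-th factor, a sub-cube of $[1, 2^n - 1]$ enclosing $p_k$ and $q_k$, which is impossible. Thus $F$ is a fooling set for $R$, and any DNF for $R$ has at least $n^d$ terms.

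I do not anticipate a real obstacle: the construction is explicit, the 1D fooling-set check is immediate, and the product-structure argument that lifts the fooling set to $d$ dimensions is exactly the statement that sub-cubes respect the coordinate partition. The only mild care is in verifying this product decomposition of sub-cubes and in confirming that each $e_j$ does lie in $[1, 2^n - 1]$ (which holds since $1 \leq 2^{n-j} \leq 2^{n-1} < 2^n - 1$ for $n \geq 2$).
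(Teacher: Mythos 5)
Your proof is correct and uses the same witness as the paper, namely $R=[1,2^n-1]^d$, and the same underlying combinatorics: the $n^d$ points with exactly one $1$ per dimension, no two of which can lie in a common sub-cube inside $R$ because zeroing out the freed coordinates in a dimension where they differ produces the excluded all-zero block. The only difference is presentational: you package the counting step as an explicit fooling-set lemma, whereas the paper argues that every term of a DNF for $R$ must contain a positive literal from each dimension and is hence contained in one of the $n^d$ canonical terms, leaving the final ``therefore at least $n^d$ terms'' count slightly more implicit.
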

\begin{proof}
The observation follows by considering the range $R = [1,2^n-1]^d$ (only 0 is missing from the interval in each dimension). We will argue that any DNF formula $\varphi$ for this range has size (number of terms) $ \geq n^d$.
For any $1\leq j \leq d$, we use the set of variables $X^{j}=\{x^j_1,x^j_2,\ldots,x^j_n\}$ for representing the $j^{th}$ coordinate of $R$. Then $R$ can be represented as the formula $\varphi_{R} = \vee_{(i_1,i_2,\ldots,i_d)} x_{i_1}^1x_{i_2}^2\ldots x_{i_d}^d$, where $1\leq i_j \leq n$. This formula has $n^d$ terms. Let $\varphi$ be any other DNF formula representing $R$. The main observation is that any term $T$ of $\varphi$ is completely contained (in terms of the set of solutions) in one of the terms of $\varphi_{R}$. This implies that $\varphi$ should have $n^d$ terms. Now we argue that $T$ is contained in one of the terms of $\varphi_{R}$. $T$ should have at least one variable as positive literal from each of $X^j$. Suppose  $T$  does not have any variable from $X^j$ for some $j$. Then $T$ contains a solution with all the variable in $X^j$ set to 0 and hence not in $R$. Now let $x^j_{i_j}$ be a variable from  $X^j$ that is in $T$. Then clearly $T$ is in the term $x^1_{i_1}x^2_{i_2}\ldots x^d_{i_d}$ of $R$. 
\end{proof}

This leads to the question of whether we can obtain a super-polynomial  lower bound on the time per item. We observe that such a lower bound would imply $\mathrm{P} \neq \mathrm{NP}$. For this, we note the following.

\begin{observation}
Any $d$-dimensional range $R$ can be represented as a CNF formula of size $O(nd)$ over $nd$ variables. 
\end{observation}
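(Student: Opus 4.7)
The plan is to exhibit an $O(nd)$-clause CNF by decomposing the $d$-dimensional range along coordinates. Use $d$ disjoint blocks of variables $X^j = \{x^j_1, \ldots, x^j_n\}$, one block per dimension, so there are $nd$ variables in total. Since a $d$-dimensional range is a Cartesian product, its characteristic predicate is a \emph{conjunction}
\[
\varphi_R \;=\; \bigwedge_{j=1}^{d} \bigl( \psi^{\leq}_{b_j}(X^j) \wedge \psi^{\geq}_{a_j}(X^j) \bigr),
\]
where $\psi^{\leq}_{b}(X)$ and $\psi^{\geq}_{a}(X)$ are CNF formulas over $X = \{x_1, \ldots, x_n\}$ expressing $x \leq b$ and $x \geq a$ respectively. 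Thus it suffices to show that each of these two one-dimensional predicates has a CNF encoding with at most $n$ clauses; conjoining $2n$ clauses per dimension over $d$ dimensions yields $O(nd)$ clauses overall.

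For the one-dimensional step, fix $b = b_1 b_2 \cdots b_n$ (MSB first) and encode $x > b$ as a DNF: $x > b$ holds iff there exists some index $i$ at which $x$ first exceeds $b$, i.e.\ $x_1\cdots x_{i-1} = b_1\cdots b_{i-1}$ together with $b_i = 0$ and $x_i = 1$. Only positions $i$ with $b_i = 0$ contribute. Negating this characterization converts the disjunction over positions into a conjunction, giving one clause per position $i$ with $b_i = 0$:
\[
\psi^{\leq}_{b}(X) \;=\; \bigwedge_{i : b_i = 0} \Bigl( \bar{x}_i \;\vee\; \bigvee_{j<i,\, b_j = 1} \bar{x}_j \;\vee\; \bigvee_{j<i,\, b_j = 0} x_j \Bigr),
\]
which is at most $n$ clauses. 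A symmetric construction gives $\psi^{\geq}_{a}(X)$ in at most $n$ clauses (replacing ``$x$ first exceeds $b$'' by ``$x$ first falls below $a$'').

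Finally, stringing the bound together: each $[a_j, b_j]$ contributes at most $2n$ clauses on its own $n$-variable block, and the conjunction over $d$ coordinates is itself already a CNF (conjoining CNFs preserves CNF form and sums clause counts), so $\varphi_R$ has at most $2nd$ clauses over $nd$ variables. There is essentially no obstacle beyond getting the one-dimensional CNF right; the conversion from the natural DNF for ``$x > b$'' to a CNF for ``$x \leq b$'' by direct negation is the only step requiring care, and it is cleanly linear because only the bits of $b$ that equal $0$ produce clauses.
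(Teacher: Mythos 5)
Your proof is correct and follows essentially the same route as the paper: decompose the range as a conjunction of per-coordinate constraints over $d$ disjoint $n$-variable blocks, and use an $O(n)$-clause CNF for each one-dimensional range $[a_j,b_j]$. The only difference is that the paper delegates the one-dimensional encoding to a citation, whereas you spell out the explicit clause set (one clause per zero bit of $b$ for $x \le b$, and symmetrically for $x \ge a$), which checks out and gives the same $O(nd)$ bound.
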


This is because a single dimensional range $[a, b]$ can also be represented as a CNF formula of size $O(n)$~\cite{CFMV15} and thus the CNF formula for $R$ is a conjunction of formulas along each dimension. Thus the problem of computing $F_0$ over $d$-dimensional ranges reduces to computing $F_0$ over a stream where each item of the stream is a CNF formula. As in the proof of Theorem~\ref{DNFStream:thm}, we can adapt Minimum-value based algorithm for CNF streams.  When a CNF formula $\varphi_i$ arrive, we need to compute the $t$ lexicographically smallest elements of  $h(\satisfying{\varphi_i})$ where $h \in \Hteop(n,3n)$.  By Proposition~\ref{prop:findmin},  this can be done in polynomial-time by making $O(tnd)$ calls to an NP oracle since $\phi_i$ is a CNF formula over $nd$ variables. Thus if P equals NP, then the time taken per range is polynomial in $n$, $d$, and $1/\varepsilon^2$. Thus a super polynomial time lower bound on time per item implies that P differs from NP.

\subsubsection*{From Weighted \#DNF  to d-Dimensional Ranges}
Designing a streaming algorithm with a  per item of polynomial in $n$ and $d$ is a very interesting open problem with implications on weighted  DNF counting.  Consider a formula $\varphi$ defined on the set of variables $x = \{x_1, x_2, \ldots  x_n  \}$. Let a weight function $\rho: x \mapsto (0,1)$ 
be such that weight of an assignment $\sigma$ can be defined as follows: 
\begin{align*}
W(\sigma) = \prod_{x_i: \sigma(x_i) = 1} \rho(x_i) \prod_{x_i:\sigma(x_i) = 0} (1-\rho(x_i))
\end{align*} 
Furthermore, we define the weight of a formula $\varphi$ as 
\begin{align*}
W(\varphi) = \sum_{\sigma \models \varphi} W(\sigma)
\end{align*}

Given $\varphi$ and $\rho$, the problem of weighted counting is to compute $W(\varphi)$. We consider the case where for each $x_i$, $\rho(x_i)$ is represented using $m_i$ bits in binary representation, i.e., $\rho(x_i) = \frac{k_i}{2^{m_i}}$. Inspired by the key idea of weighted to unweighted reduction due to Chakraborty et al.~\cite{CFMV15}, we show how the problem of weighted  DNF counting can be reduced to that of estimation of $F_0$ estimation of  $n$-dimensional ranges. The reduction is as follows: we transform every term of $\varphi$ into a product of multi-dimension ranges where every variable $x_i$ is replaced with interval $[1,k_i]$ while $\neg x_i$ is replaced with $[k_i+1, 2^{m_i}]$ and every $\wedge$ is replaced with $\times. $
For example, a term $(x_1 \wedge \neg x_2 \wedge \neg x_3)$ is replaced with $[1,k_1]\times[k_2+1, 2^{m_2}]\times[k_3+1,2^{m_3}]$. Given $F_0$ of the resulting stream, we can compute the weight of $\varphi$ simply as $ W(\varphi) = \frac{F_0}{2^{\sum_i m_i} }$.  Thus a hashing based streaming algorithm that has $poly(n, d)$ time per item, yields a hashing based FPRAS for weighted DNF counting, and open problem from~\cite{ACL19}.




\subsubsection{Multidimensional Arithmetic Progressions.}


We will now generalize Theorem~\ref{RangeEfficient:cor}  to handle arithmetic progressions instead of ranges.  Let $[a, b, c]$ represent the arithmetic progression with common difference $c$ in the range $[a, b]$, i.e., $a , a+c, a+2c, a + id$, where $i$ is the largest integer such that $a+ id \leq b$.  
Here,  we consider $d$-dimensional arithmetic progressions  $R = [a_1, b_1, c_1] \times \cdots  \times [a_d, b_d, c_d]$ where each $c_i$ is a power two. We first observe that  the set represented by $[a, b, 2^\ell]$ can be expressed as a DNF formula as follows: Let $\phi$ be the DNF formula representing the range $[a, b]$ and let $a_1, \cdots, a_\ell$ are the least significant bits of $a$, Let $\psi$ be the  term that represents the bit sequence $a_1 \cdots a_\ell$. Now the formula to represent the arithmetic progression $[a, b, 2^\ell]$ is  $\phi \wedge \psi$ which can be converted to a DNF formula of size $O(2n)$. Thus the multi-dimensional arithmetic progression $R$ can be represented as a DNF formula of size $(2n)^d$. Note that time and space required to convert $R$ into a DNF formula are as before, i.e, $O(n^d)$ time and $O(nd)$ space.   This leads us to the following corollary.

\begin{corollary}
There is a streaming algorithm to compute an $(\epsilon, \delta)$ approximation of $F_0$  over $d$-dimensional arithmetic progressions, whose common differences are powers of two, that takes space $O(nd/\varepsilon^2\cdot\log 1/\delta)$ and processing time $O((nd)^4\cdot n^d \cdot \frac{1}{\varepsilon^2})\log(1/\delta))$ per item. 

\end{corollary}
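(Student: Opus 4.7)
The plan is to reduce $F_0$ computation over $d$-dimensional arithmetic progressions with power-of-two common differences to $F_0$ computation over DNF sets, then plug directly into Theorem~\ref{DNFStream:thm}. The reduction is already sketched in the paragraph preceding the corollary: for each dimension $i$, the one-dimensional arithmetic progression $[a_i, b_i, 2^{\ell_i}]$ is first expressed as the conjunction of the DNF formula $\phi_i$ of size $O(n)$ representing the range $[a_i, b_i]$ (from the proof of Lemma~\ref{rangeToDNF:lem}) with a single conjunctive term $\psi_i$ that fixes the $\ell_i$ least significant bits to match $a_i$. Distributing $\psi_i$ across $\phi_i$ keeps the per-dimension size at $O(n)$ and yields a DNF on $n$ variables describing the AP. Taking the conjunction across the $d$ dimensions and expanding produces a DNF on $nd$ variables with at most $(2n)^d$ terms.

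Next, I would feed the resulting stream of DNF formulas into the Minimum-value streaming algorithm that underlies Theorem~\ref{DNFStream:thm}. Correctness is then inherited from Lemma~\ref{lem:bjkstmin}, and the bounds follow by substituting the parameters $n \mapsto nd$ (the number of Boolean variables per formula) and $k \mapsto (2n)^d = O(n^d)$ (the number of terms per formula) into the complexity bounds of Theorem~\ref{DNFStream:thm}. This substitution yields per-item processing time $O((nd)^4 \cdot n^d \cdot \varepsilon^{-2} \log(1/\delta))$ and working space $O(nd/\varepsilon^2 \cdot \log(1/\delta))$, matching the statement of the corollary.

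The one minor point that needs to be checked, rather than a genuine obstacle, is that the AP-to-DNF encoding is \emph{term-enumerable} in $O(nd)$ space, so the algorithm never has to materialize the full $(2n)^d$-sized formula at once. This is handled exactly as in Lemma~\ref{rangeToDNF:lem}: the $i$th one-dimensional factor $\phi_i \wedge \psi_i$ is a DNF of width $O(n)$ with at most $2n$ terms that can be streamed out one term at a time in $O(n)$ space; the full $d$-dimensional DNF is then enumerated by iterating over index tuples $(j_1,\ldots,j_d) \in [2n]^d$ and emitting the corresponding product term in $O(nd)$ space. This is exactly the form the proof of Theorem~\ref{DNFStream:thm} already consumes, so no further work is needed beyond the parameter substitution.
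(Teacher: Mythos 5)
Your proposal is correct and follows essentially the same route as the paper: encode each one-dimensional progression $[a_i,b_i,2^{\ell_i}]$ as the range DNF conjoined with a single term fixing the $\ell_i$ low-order bits, distribute to keep $O(n)$ terms per dimension, expand the $d$-fold conjunction into a DNF with at most $(2n)^d$ terms on $nd$ variables, and substitute into Theorem~\ref{DNFStream:thm}. Your added remark on enumerating terms one at a time in $O(nd)$ space matches the paper's note that the conversion uses $O(nd)$ space, so nothing is missing.
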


\subsection*{Affine Spaces}
Another example of structured stream is where each item of the stream is an affine space represented by $Ax = B$ where $A$ is a boolean matrix and $B$ is a zero-one vector.   Without loss of generality, we may assume that where $A$ is a $n \times n$  matrix.  Thus  an affine stream consists of $\langle A_1, B_\rangle, \langle A_2, B_2\rangle \cdots$, where each $\langle A_i, B_i\rangle$ is succinctly  represents a set $\{x \in \{0,1 \}^n\mid A_ix= B_i\}$.

For a $n \times n$ Boolean matrix $A$ and a zero-one vector $B$, let $\satisfying{\langle A, B\rangle}$ denote the set of all $x$ that satisfy $Ax = B$.

\begin{proposition}
Given $(A, B)$,  $h \in \Hteop(n,3n)$, and $t$ as input, there is an algorithm, {\AffineFindMin}, that returns a set, $\mathcal{B} \subseteq h(\satisfying{\langle A, B\rangle})$ so that if   $|h(\satisfying{\langle A, B\rangle)}| \leq t$, then $\mathcal{B} =h(\satisfying{\langle A, B\rangle})$, otherwise $\mathcal{B}$ is the $t$ lexicographically minimum elements of $h(\satisfying{\langle A, B\rangle})$. Time taken by this algorithm is $O(n^4t)$ and the space taken the algorithm is $O( tn)$.
\end{proposition}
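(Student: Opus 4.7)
The plan is to reduce this problem to the DNF-term case handled in Proposition~\ref{prop:findmin}. The key observation is that the solution set of an affine system is itself a (translated) linear subspace, and under a linear hash function its image is again an affine image of a Boolean cube, which is structurally identical to the object that the {\FindMin} routine already handles.

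First, I would preprocess $(A,B)$ by running Gaussian elimination to determine whether $Ax=B$ is consistent (returning $\emptyset$ otherwise), to extract a particular solution $x_0 \in \{0,1\}^n$, and to produce a matrix $K \in \mathbb{F}_2^{n \times k}$ whose columns form a basis of $\ker(A)$, where $k = n - \mathrm{rank}(A)$. This costs $O(n^3)$ time. Writing the hash function as $h(x) = Mx + c$ with $M \in \mathbb{F}_2^{3n \times n}$ Toeplitz and $c \in \mathbb{F}_2^{3n}$, we then have
\[
h(\satisfying{\langle A,B\rangle}) \;=\; \{ Mx_0 + MKz + c \;:\; z \in \{0,1\}^k \} \;=\; \{ A'z + b' \;:\; z \in \{0,1\}^k \},
\]
where $A' = MK$ and $b' = Mx_0 + c$ are computed in $O(n^3)$ time.

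At this point the task is identical in form to the subroutine analyzed inside the proof of Proposition~\ref{prop:findmin}: enumerate the $t$ lexicographically smallest strings in the image of an affine map $z \mapsto A'z + b'$ over $\{0,1\}^k$. I would use the very same prefix-searching primitive: given an $l$-bit prefix $y_1\ldots y_l$, decide whether some $z$ makes the first $l$ bits of $A'z+b'$ equal to $y_1\ldots y_l$, which reduces to solving an $l \times k$ linear system by Gaussian elimination in $O(l^2 k) = O(n^3)$ time. Finding the successor of the current minimum requires $O(n)$ prefix tests (one per bit position), hence $O(n^4)$ per minimum, giving the claimed $O(n^4 t)$ total time. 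For space, the preprocessed matrices $A'$ and $K$ and the running list of at most $t$ minima (each of length $3n$) fit in $O(tn)$ bits once $t = \Omega(n)$, which is the regime of interest.

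The main obstacle is not algorithmic but purely a matter of setting up the reduction cleanly: once the solution set of $Ax=B$ is parameterized as $x_0 + Kz$, the image under $h$ becomes an affine map on a Boolean cube, and every ingredient of the DNF-term argument (Gaussian-elimination prefix search, iterative minimum extraction) transfers verbatim. The only subtlety is verifying that the preprocessing cost $O(n^3)$ is absorbed by the $O(n^4 t)$ search cost, and that the degenerate cases ($Ax=B$ infeasible, or $k$ small so that $|\satisfying{\langle A,B\rangle}| \le t$ and the enumeration terminates early) are handled correctly by simply halting when no new prefix can be extended.
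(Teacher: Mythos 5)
Your proof is correct and uses essentially the same technique as the paper: a lexicographic prefix search implemented by Gaussian elimination, exactly as in the proof of Proposition~\ref{prop:findmin}. The only difference is in the setup — the paper keeps the constraint explicit and searches for prefixes $y_1\ldots y_l$ such that the augmented system $D|A\,x = y|B$ (a matrix with $4n$ rows) is solvable, whereas you first parameterize $\satisfying{\langle A,B\rangle}$ as $x_0 + Kz$ via a kernel basis and reduce verbatim to the affine-image-of-a-cube case already handled for a single DNF term; both yield the same $O(n^4 t)$ bound.
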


\begin{proof}
 Let $D$ be the matrix that specifies the hash function $h$.  Let $\mathcal{C} = \{Dx~|~ Ax =B\}$, and the goal is to compute the $t$ smallest element of $\mathcal{C}$.  Note that if $y \in \mathcal{C}$, then it must be the case that $D|Ax =y|B$ where $D|A$ is the matrix obtained by appending rows of $A$ to the rows of $D$ (at the end), and $y|B$ is the vector obtained by appending $B$ to $y$. Note that $D|A$ is a matrix with $4n$ rows.
Now the proof is very similar to the proof of Proposition~\ref{prop:findmin}.  We can do a prefix search as before and this involves doing Gaussian elimination using  sub matrices of $D|A$. 
\end{proof}



\begin{theorem}
There is a streaming algorithms  computes $(\epsilon, \delta)$ approximation of  $F_0$ over affine spaces. This algorithm takes space $O(\frac{n}{\epsilon^2}\cdot \log(1/\delta))$ and processing time of $O(n^4\frac{1}{\epsilon^2}\log(1/\delta))$ per item.
\end{theorem}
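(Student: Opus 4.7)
The plan is to mimic the proof of Theorem~\ref{DNFStream:thm} but with the \AffineFindMin\ subroutine playing the role of \FindMin. Concretely, I would instantiate the Minimum-value based template from Section~\ref{sec:minimum}: pick independently $r = O(\log(1/\delta))$ hash functions $h_1,\ldots,h_r$ from $\Hteop(n,3n)$, and, for each $h_\ell$, maintain a sketch $\mathcal{B}_\ell$ consisting of the $t = O(1/\varepsilon^2)$ lexicographically smallest elements of $h_\ell\!\left(\bigcup_{j<i}\satisfying{\langle A_j, B_j\rangle}\right)$ after the first $i-1$ items have been processed.

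When the $i^{th}$ item $\langle A_i,B_i\rangle$ arrives, for each repetition $\ell$ I would call $\AffineFindMin(\langle A_i,B_i\rangle, h_\ell, t)$ to obtain a set $\mathcal{B}'_\ell \subseteq h_\ell(\satisfying{\langle A_i,B_i\rangle})$ of size at most $t$ containing its lexicographic minima, and then update $\mathcal{B}_\ell$ by extracting the $t$ lexicographically smallest elements of $\mathcal{B}_\ell \cup \mathcal{B}'_\ell$. A simple induction shows that after processing the first $i$ items, $\mathcal{B}_\ell$ indeed equals the $t$ smallest elements of $h_\ell(\bigcup_{j\le i}\satisfying{\langle A_j,B_j\rangle})$, since a lexicographic minimum of a union of two sets is necessarily a minimum of one of them. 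At the end of the stream, the coordinator outputs the median over $\ell$ of $\tfrac{t\cdot 2^{3n}}{\max \mathcal{B}_\ell}$, exactly as in \ApproxModelCountMin.

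Correctness is immediate from Lemma~\ref{lem:bjkstmin} applied to the multiset $\mathbf{a}_u = \bigcup_i \satisfying{\langle A_i,B_i\rangle}$, since $\mathcal{B}_\ell$ satisfies the required property $\mathcal{P}_2$ and the $r = O(\log(1/\delta))$ independent repetitions combined with the median amplify the success probability to $1-\delta$. For the resource bounds: each \AffineFindMin\ call costs $O(n^4 t)$ time and $O(tn)$ space by the preceding proposition, the merge step costs $O(tn)$, and summing over the $O(\log(1/\delta))$ repetitions yields per-item time $O\!\left(n^4 \tfrac{1}{\varepsilon^2}\log(1/\delta)\right)$; each Toeplitz hash function is stored in $O(n)$ bits and each sketch uses $O(tn) = O(n/\varepsilon^2)$ bits, giving total space $O\!\left(\tfrac{n}{\varepsilon^2}\log(1/\delta)\right)$.

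There is no real obstacle here, as the hard work has been absorbed into the preceding proposition: the only nontrivial point is verifying the induction that $\mathcal{B}_\ell$ remains the correct prefix of lexicographic minima after each update, and that \AffineFindMin\ can be invoked on an individual affine system using only $O(n)$ bits to encode $\langle A_i,B_i\rangle$ and $h_\ell$ together through prefix search on the augmented matrix $D|A_i$. Everything else is a direct specialization of the Minimum-value recipe to the affine-space setting.
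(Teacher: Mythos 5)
Your proposal is correct and follows essentially the same route the paper intends: the theorem is proved exactly as Theorem~\ref{DNFStream:thm}, by instantiating the Minimum-value template with \AffineFindMin{} in place of \FindMin{}, merging the $t$ lexicographic minima per item, taking the median over $O(\log(1/\delta))$ independent repetitions, and invoking Lemma~\ref{lem:bjkstmin} for correctness and the preceding proposition for the $O(n^4 t)$ per-call cost. The resource accounting you give matches the paper's.
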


\section{Conclusion and Future Outlook}\label{sec:conclusion}


To summarize, our investigation led to a diverse set of results that unify over two decades of work in model counting and $F_0$
estimation. We believe that the viewpoint presented in this work has potential to spur several new interesting research directions. 
We sketch some of these directions below:

\begin{description}
	\item[Sampling] The problem of counting and sampling are closely related. In particular, the seminal work of Jerrum,
	Valiant, and Vazirani~\cite{JVV86} showed that the problem of approximate counting and almost-uniform sampling are inter-reducible for self-reducible
	NP problems. Concurrent to developments in approximate model counting, there has been a significant interest in the design of efficient
	sampling algorithms. A natural direction would be to launch a similar investigation. 
	
	\item[Higher Moments] There has been a long line of work on  estimation of higher moments, i.e. $F_k$ in streaming context. 
	A natural direction of future research is to adapt the notion of $F_k$ in the context of CSP. For example, in the context of DNF, 
	one can view $F_1$ be simply a sum of the size of clauses but it remains to be seen to understand the relevance and potential applications  of higher moments
	such as $F_2$ in the context of CSP. Given the similarity of the core algorithmic frameworks for higher moments, we expect extension of 
	the framework and recipe presented in the paper to derive algorithms for higher moments in the context of CSP. 
	
	\item[Sparse XORs] In the context of model counting, the performance of underlying SAT solvers strongly depends on the size of XORs. The 
	standard construction of $\Hteop$ and $\Hxor$ lead to XORs of size $\Theta(n/2)$ and interesting line of research has focused on the design 
	of sparse XOR-based hash functions~\cite{GHSS07,IMMV15,EGSS14a,AT17,AD16} culminating in showing that one can use hash functions of
	form where $h(x) = Ax+b$ wherein  each entry of m-th row of $A$ is 1 with probability $\mathcal{O}(\frac{\log m}{m})$~\cite{MA20}. Such XORs were
	shown to improve the runtime efficiency. In this context,  a natural direction would be to explore the usage of sparse XORs in the context of $F_0$
	estimation.  
\end{description}

\begin{acks}
We thank the anonymous reviewers of PODS 21 for valuable comments. Bhattacharyya was supported in part by National Research Foundation Singapore under its NRF Fellowship Programme [NRF-NRFFAI1-2019-0002] and an Amazon Research Award. Meel was supported in part by National Research Foundation Singapore under its NRF Fellowship Programme[NRF-NRFFAI1-2019-0004 ] and AI Singapore Programme [AISG-RP-2018-005], and NUS ODPRT Grant [R-252-000-685-13]. Vinod was supported in part by NSF CCF-184908 and NSF HDR:TRIPODS-1934884 awards. Pavan was supported in part by NSF CCF-1849053 and NSF HDR:TRIPODS-1934884 awards. 
\end{acks}
\bibliographystyle{ACM-Reference-Format}
\bibliography{sigproc,streambib}

\end{document}